\title{Weak Cost Register Automata are Still Powerful}
\author{Shaull Almagor\inst{1} \and Michaël Cadilhac\inst{1} \and Filip
  Mazowiecki\inst{2} \and Guillermo A. Pérez\inst{3}}
\institute{University of Oxford
  \and
  LABRI, Université de Bordeaux
  \and
  Université libre de Bruxelles}
\begin{document}

\maketitle

\begin{abstract}
  We consider one of the weakest variants of cost register automata over a
  tropical semiring, namely copyless cost register automata over \(\bbN\) with
  updates using \(\min\) and increments.  We show that this model can simulate, in
  some sense, the runs of counter machines with zero-tests.  We deduce that a
  number of problems pertaining to that model are undecidable, in particular
  equivalence, disproving a conjecture of Alur~et~al.~from 2012.  To emphasize
  how weak these machines are, we also show that they can be expressed as a
  restricted form of linearly-ambiguous weighted automata.
\end{abstract}

\vspace{2em}

\centerline{\textsc{Contents}}

\vspace{1em}

\tableofcontentsinline

\vfill

\newpage

\section{Introduction}

Cost register automata (CRA)~\cite{alur-et-al13} encompass a wealth of computation
models for functions from words to values (herein, integers).  In their full
generality, a CRA is simply a DFA equipped with registers that are updated upon
taking transitions.  The updates are expressions built using a prescribed set of
operations (e.g., \(+, \times, \min, \ldots\)), constants, and the registers themselves.

In this work, we will focus on CRA computing integer values, where the updates
may only use ``\(+c\)'', for any constant \(c\), and \(\min\).  For instance:
\begin{center}
  \begin{smallautomaton}[font={\scriptsize}]
    \node[smallstate,initial below,accepting by arrow, accepting above, accepting text=\(r_2\)] (p) {};
    \node[smallstate] (q) at (3, 0) {};
    \path[->] (p) [loop left] edge node {\(\#,
      \begin{update}
        r_1 & r_1\\
        r_2 & r_2
      \end{update}
      \)} ();
    \path[->] (p) [bend left] edge node {\(a,
      \begin{update}
        r_1 & 1\\
        r_2 & r_2\\
      \end{update}
      \)} (q)
    (q) [bend left] edge node {\(\#,
      \begin{update}
        r_1 & 0\\
        r_2 & \min\{r_1, r_2\}
      \end{update}
      \)} (p);
    \path [->] (q) [loop right] edge node {\(a,
      \begin{update}
        r_1 & r_1 + 1\\
        r_2 & r_2
      \end{update}
      \)} (q);
  \end{smallautomaton}
\end{center}
With \(r_1\) initialized to \(0\) and \(r_2\) to \(\infty\), this CRA computes the length of
the minimal nonempty block of \(a\)'s between two \(\#\)'s.  This model has the same
expressive power as weighted automata (WA) over the structure \((\bbZ, \min, +)\),
but the use of registers can simplify the design of functions.

The example above enjoys an extra property that can be used to restrain the
model (since a lot of interesting problems are undecidable on
WA~\cite{almagor-boker-kupferman11}).  Indeed, \emph{no register is used twice
  in any update function}; this property is called \emph{copylessness}.  This
syntactic restriction, introduced by Alur~et~al.~\cite{alur-et-al13} and studied
by Mazowiecki and Riveros~\cite{mazowiecki-riveros16}, provably weakens the
model.  It was the hope of Alur~et~al.~that this would provide a model for which
equivalence is decidable.

\paragraph{Semilinearity and decidability of equivalence.} Recall that
a set \(R \subseteq \bbZ^k\) is semilinear if it is expressible in first-order logic with
addition: \(\FO[<, +]\).  This latter logic being decidable~\cite{presburger27},
semilinearity is a useful tool to show decidability results.  For instance, let
\(f, g\colon A^* \to \bbZ\) be expressible in some model for which the images of
functions are effectively semilinear.  Suppose further that the function
\(h\colon w \mapsto \min\{2\times f(w), 2\times g(w) +1\}\) is also in that model.  Since the
image \(h(A^*)\) is effectively semilinear, one can check whether it is always
even: this would show that \(f(w) \leq g(w)\) for all \(w\).  A first natural question
is thus, is copyless CRA (CCRA) such a model?

\paragraph{Iterating \(\min\) breaks semilinearity.}  Deterministic automata
equipped with copyless registers with only ``\(+c\)'' updates are quite
well-behaved~\cite[Section~6]{cadilhac-finkel-mckenzie13}; in particular, the set
\[R = \{\vec{r} \mid \vec{r} \text{ are the values of the registers at the end of an
    accepting run}\}\]%
is semilinear.  Naturally, \(\min\{x, y\}\) is expressible in \(\FO[<, +]\), hence
\(\FO[<, +] = \FO[<, +, \min]\) (even, and this is not immediate, when the extra
value \(\infty\) is added~\cite{gaubert-katz04}).  This entails that if we were to give
to these automata the ability to do a \emph{constant} number of \(\min\), we would
still have that \(R\) is semilinear.  In this paper, it is shown that if the
number of \(\min\) is unbounded along runs, then the set is not semilinear (see
the proof of \Cref{thm:sl} for a simple construction), and that it is
undecidable to check whether \(R\) is semilinear.

\paragraph{Contributions.} Beyond considerations on semilinearity, we show that
CCRA over~\(\bbN\) can simulate the runs of counter machines with zero-tests
(\Cref{thm:sim}).  Intuitively, the only words mapped by the CCRA to an even
value are the correct executions of the counter machine.  This construction is
then used to show that equivalence is undecidable for CCRA over \bbN and that
upper-boundedness is undecidable for WA.  To better gauge the expressiveness of
CCRA, we show that they are a weak form of \emph{linearly-ambiguous} WA, that
is, WA for which no word~\(w\) has more than \(k \times |w|\) accepting runs, for some
constant \(k\) (see drawing on page~\pageref{fig:nf}).  Since the problems we
tackle are decidable for \emph{finitely-ambiguous} WA, CCRA are arguably the
simplest generalization of deterministic WA for which equivalence is
undecidable.

\section{Preliminaries}

We assume familiarity with automata theory, for which we settle some notations.
We write \(\bbN\) for \(\{0, 1, 2, \ldots\}\), \(\bbZ\) for the integers, and define
\(\bbNi = \bbN \cup \{\infty\}\) and \(\bbZi = \bbZ \cup \{\infty\}\).  Naturally,
\(\min\{\ldots, \infty, \ldots\}\) stays the same when removing the~\(\infty\) value, and we set
\(\min\emptyset = \infty\).  For any \(k \geq 1\), we write \([k]\) for \(\{1, 2, \ldots, k\}\).  We write
\(\eps\) for the empty word.

\paragraph{Automata.}  An automaton (NFA) is a tuple \((Q, A, \delta, q_0, F)\), where
\(Q\) is the set of states, \(A\) the alphabet,
\(\delta\subseteq Q \times (A \cup \{\eps\}) \times Q\) the transition relation, \(q_0\) the initial state, and
\(F \subseteq Q\) the set of final states.  We rely on the usual vocabulary pertaining to
automata: a |run| is a word in \(\delta^*\) starting in \(q_0\), and such that each
transition is consistent with the next; it is accepting if the last reached
state is in \(F\); a word \(w \in A^*\) is accepted if there is an accepting run
labeled by \(w\).

If \(\delta\) is a function from \(Q \times A\) to \(Q\), the automaton is |deterministic|
(DFA).  If there is a \(k \in \bbN\) such that each accepted word \(w\) is the label
of at most \(k \times |w|\) accepting runs, the automaton is |linearly-ambiguous|.

\paragraph{Tropicalities.} The only semirings (\ie, algebraic structures) that we
will use are \((\bbZi, \min, +)\) and \((\bbNi, \min, +)\), often dubbed ``tropical
semirings.''  When the discussion is not specific to one of the two semirings,
we simply write \(\bbK\) for both.  As with rings, matrix multiplication is
well-defined in semirings; e.g., if \((b_{ij})\) and \((c_{ij})\) are
\(2\times2\) matrices and \((a_{ij}) = (b_{ij})\cdot(c_{ij})\), then:
\[a_{2, 1} = \min \{b_{2,1}+c_{1,1}, b_{2,2} +
c_{2,1}\}\enspace.\]

\paragraph{Weighted automata.}  Weighted automata will only be used in
Section~\ref{sec:wa} and \Cref{thm:ub}.  A weighted automaton \(\cW\) over
\(\bbK\) (\kwa) is a tuple \((\cA, \lambda, \mu, \nu)\) where
\(\cA = (Q, A, \delta, q_0, F)\) is an NFA, and
\(\lambda \in \bbK, \mu\colon \delta \to \bbK,\) and \(\nu \colon F \to \bbK\).  Given a run
\(t_1\cdot t_2\cdots t_n \in \delta^*\) ending in a state \(q \in F\) in \(cA\), its \emph{weight} is
\(\lambda + \mu(t_1) + \mu(t_2) + \cdots + \mu(t_n) + \nu(q)\).  The weight \(\cW(w)\) of a word
\(w \in A^*\) is the minimum weight for all accepting runs over \(w\) in the NFA
(hence it is \(\infty\) if the word is not accepted).  The \kwa is |deterministic|
(\resp~|linearly-ambiguous|) if \(\cA\) is.  We use \(\kdetwa\) and \(\klinwa\) for
these restrictions.

\paragraph{Registers and counters.} A central goal of this work is to present a
simulation of some \emph{counter} machine with zero-tests by a \emph{register}
machine without zero-test but with more complicated update functions.  To avoid
confusion, we will stick to that vocabulary, and use \(c_i\) for counters and
\(r_i\) for registers.

\paragraph{Cost register automata.}\label{def:cra}  In this work, we only
consider cost register automata over \(\bbK \in \{\bbZi, \bbNi\}\) where the registers
are updated using expressions that use \(\min\) and ``\(+c\)'' for
\(c \in \bbK\).  A precise, formal definition of the model will only be needed for
\Cref{prop:nf}; to present the main constructions, we will simply rely on the
following more intuitive definition.

A \kcra \(\cC\) of dimension \(k\) is a DFA equipped with \(k\) registers
\(r_1, r_2, \ldots, r_k\) taking values in \(\bbK\).  The initial values of the registers
are specified by a vector in~\(\bbK^k\), and each transition further induces a
transformation of the form:
\[(\forall i \in [k])\quad r_i \leftarrow \min \{r_1 + m_{1, i},\enskip r_2 + m_{2, i},\enskip
  \ldots,\enskip r_k + m_{k, i},\enskip m_{k+1,i}\}\enspace,\]%
where each \(m_{i, j}\) is in \(\bbK\) (hence it can be \(\infty\), making the
subexpression irrelevant).  Each final state is paired with an output function
of the shape:
\[\min \{r_1 + m_1,\; r_2 + m_2,\; \ldots,\; r_k + m_k,\; m_{k+1}\}\enspace,\]
where again the \(m_i\)'s are in \(\bbK\).

Given a word \(w \in A^*\), the value of \(\cC\) on \(w\), written \(\cC(w)\), is \(\infty\) if
\(w\) is not accepted by the underlying DFA, and otherwise computed in the
obvious way: the registers are initialized, then updated along the (single) run
in the DFA, and the output is determined by the output function at the final
state.

The \kcra is said to be \emph{copyless} (\kccra) if all the update functions
satisfy, using the notations above, that for all \(i \in [k]\),
\(|\{j \mid m_{i, j} \neq \infty\}| \leq 1\); in words, for each \(i\), at most one of the
subexpressions ``\(r_i + m_{i, j}\)'' will evaluate to a non-\(\infty\) value: the value
of \(r_i\) impacts at most one register.

\paragraph{Vector addition systems with states and zero-tests.} The main
construction of this paper focuses on simulating counters with zero-tests.  The
precise formalism for our counter machines is a variant of vector addition
systems with states~(VASS) over \(\bbZ^k\), equipped with transitions that can
only be fired if a designated counter is zero.  For any \(k\), we define the
\emph{update alphabet} \(C_k\) as:
\[C_k = \bigcup_{i \in [k]} \{\linc_i, \ldec_i, \lchk_i\}\enspace,\]
the intended meaning being that \(\linc_i\) will increment the \(i\)-th counter,
\(\ldec_i\) will decrement it, and \(\lchk_i\) will check that it is zero.

A \zvz \(\cV\) of dimension~\(k\) is a DFA \((Q, C_k, \delta, q_0, F)\).  Consider a
|configuration| \(K = (q, \vec{c}) \in Q \times \bbZ^k\); writing
\((\vec{e_i}) \in\bbZ^k\) for the standard basis:
\begin{itemize}
\item If \(\delta(q, \linc_i) = q'\), then \(K\) can reach the configuration \((q', \vec{c} + \vec{e_i})\);
\item If \(\delta(q, \ldec_i) = q'\), then \(K\) can reach the configuration \((q', \vec{c} - \vec{e_i})\);
\item If \(\delta(q, \lchk_i) = q'\), then \(K\) can reach the configuration \((q', \vec{c})\) iff \(c_i = 0\).
\end{itemize}
We say that the \zvz reaches a state \(q\) if \((q_0, \vec{0})\) reaches, by a
sequence of configurations, \((q, \vec{c})\) for some \(\vec{c}\).  We write
\(L_{\cV, q} \subseteq (C_k)^*\) for the |reachability language| of \(q\), that is, the
language of updates along the runs reaching \(q\).

\begin{proposition}\label{prop:undec}
  The following problem is undecidable:\\
  \mbox{\qquad}\begin{tabular}{ll}
  \textbf{Given:}& A \zvz \(\cV\) and a state \(q\)\\
  \textbf{Question:} & Is \(L_{\cV, q}\) empty?
  \end{tabular}

  \noindent The problem stays undecidable even if \(|L_{\cV, q}| \leq 1\) is guaranteed.
\end{proposition}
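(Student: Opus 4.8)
The plan is to reduce from the halting problem for two-counter Minsky machines. Such a machine has counters $c_1, c_2$ ranging over $\bbN$ and two kinds of instructions: ``increment $c_i$ and go to $\ell'$'' and ``if $c_i = 0$ go to $\ell'$, otherwise decrement $c_i$ and go to $\ell''$''; its halting problem is undecidable. I will moreover use the machine in a \emph{reversible} normal form: forward- and backward-deterministic, with the initial configuration a source (no configuration steps into it) and a single halting configuration that is a sink, each control state carrying a unique incoming instruction, and---after a routine cleanup phase---halting with $c_1 = c_2 = 0$; halting remains undecidable under these assumptions.

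From such a machine $M$ I build a \zvz $\cV$ of dimension $2$ whose control states are those of $M$. An increment instruction becomes a transition labelled $\linc_i$; the conditional becomes two transitions out of the current state, one labelled $\lchk_i$ to $\ell'$ (the zero branch) and one labelled $\ldec_i$ to $\ell''$ (the decrement branch). Finally I append to the halting state a short chain $\lchk_1; \lchk_2$ leading to a fresh sink $q$, so that reaching the state $q$ coincides with reaching the halting state of $M$ with both counters equal to $0$. The faithful simulation then reaches $q$ exactly when $M$ halts, so one direction of the reduction is immediate.

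The crux---and the only real obstacle---is that counters now range over $\bbZ$: the transition $\ldec_i$ is enabled even when $c_i = 0$, so in principle a run could take the decrement branch at a zero counter, push $c_i$ below $0$, and later climb back to reach $q$ along a computation $M$ never performs. Such ``cheating'' runs would destroy both the correctness of the reduction and the sharper claim $|L_{\cV, q}| \le 1$. Reversibility is exactly the tool that excludes them. Because every control state has a unique incoming instruction, each configuration of $\cV$ has at most one predecessor, so $\cV$ is backward-deterministic; and the initial configuration $(q_0, \vec{0})$ is a source. Hence at most one run can reach the single configuration $(q, \vec{0})$ that enters the sink, giving $|L_{\cV, q}| \le 1$ outright. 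Tracing that putative run backward from $q$, the appended tests force $c_1 = c_2 = 0$ at the halting state, and then each reversed zero-branch edge forces its counter to $0$ at the corresponding point; the trace therefore matches the reverse of $M$'s faithful computation step by step, stays nonnegative, and reaches $(q_0, \vec{0})$ if and only if $M$ halts. I expect the delicate point to be precisely this last verification---that backward-determinism together with the source property, rather than mere forward-determinism of $M$, is what prevents a negative excursion from ever lying on a run into $q$.
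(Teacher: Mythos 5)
Your reduction takes a genuinely different route from the paper's, and the step you yourself flag as delicate is a real gap, not a routine verification. Splitting the conditional into a \(\lchk_i\)-edge and an unguarded \(\ldec_i\)-edge is exactly the unsound translation that the \zvz semantics invites: the \(\ldec_i\)-edge is enabled at \(c_i=0\). Backward determinism of \(M\) buys you uniqueness of the run into \((q,\vec{0})\) (and hence \(|L_{\cV,q}|\le 1\)), but it says nothing about that unique run being faithful. Concretely, the unique backward trace from the halting configuration follows \(\cV\)'s predecessor relation, which is strictly larger than \(M\)'s: when it crosses an \(\linc_i\)-edge backward at \(c_i=0\) it happily continues to \(c_i=-1\), where \(M\)'s backward relation is undefined, and it can return to \(c_i=0\) by crossing a decrement branch backward---which, read forward, is precisely a cheat at zero. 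A fragment of the form ``at \(\ell\): if \(c_i=0\) goto \(\ell'\) (which diverges), else \(\ldec_i\) and goto \(\ell''\); at \(\ell''\): \(\linc_i\) and goto \(\ell'''\) (which halts)'' is backward-deterministic, never halts as a counter machine, yet its \zvz translation reaches the target via the excursion \(0\mapsto -1\mapsto 0\). So soundness is a property of the particular reversible machine, not a consequence of reversibility, and nothing in your argument rules such fragments out of the normal form. (As a side issue, ``each control state carries a unique incoming instruction'' is stronger than backward determinism and forbids any state of in-degree two, hence any loop enterable from outside; genuine reversible counter machines disambiguate several incoming instructions via the counter values, which only makes the above analysis harder.)

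The paper sidesteps all of this by never splitting the conditional: it keeps an explicit nonzero test \(\bar{\lchk}_i\) in the source model and compiles that single test into a gadget over a fresh counter \(j\): at least one \(\ldec_i\cdot\linc_j\), then \(\lchk_i\), then \(\linc_i\cdot\ldec_j\) repeated, then \(\lchk_j\). The mandatory first decrement followed by \(\lchk_i\) certifies \(c_i\ge 1\), and the two zero tests force the restoring loop to run exactly the right number of times, so the gadget is sound (and admits a unique successful traversal) without any global hypothesis on the machine. To salvage your approach you would need either such a local gadget or a proof that your normal form excludes every pattern in which the decrement branch re-increments the tested counter before its next zero test; as it stands, the key soundness direction is asserted rather than proved.
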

\begin{proof}
  We define an extension of \zvz that can implement classical Minsky machines to
  streamline the reduction.  Define
  \(C'_k = C_k \cup \bigcup_{i\in [k]} \{\bar{\lchk}_i\}\).  A \(k\)_counter machine is an
  automaton over \(C'_k\), with the \zvz semantics, augmented with the property
  that a transition labeled \(\bar{\lchk}_i\) can only be taken if the \(i\)-th
  counter is \emph{non}zero.


  Minsky~\cite{minsky61} showed that the emptiness of reachability languages is
  undecidable for these machines---in particular, even if it is assumed that there
  is at most one run reaching the given state.  To show the same for \(\zvz\), we
  need only remove the transitions labeled \(\bar{\lchk}_i\), while preserving the
  reachability languages.  To do so, it suffices to replace them with the
  following gadget, where \(j\) is a new counter and some states are omitted:

  \begin{center}
    \begin{smallautomaton}[font=\footnotesize]
      \node[state, initial] (q1) {\(q_1\)};
      \node[state] at (1.4cm, 0cm) (q2) {\(q_2\)};
      \path[->] (q1) edge [above,bend left] node {\(\bar{\lchk}_i\)} (q2);

      \draw [->,decorate,decoration={snake,amplitude=.4mm}] (3cm, 0) -- +(0.5cm, 0);

      \begin{scope}[xshift=5cm]
        \node[state, initial] (q1) {\(q_1\)};
        \node[smallstate] at (3cm, 0cm) (qa) {};
        \node[state] at (4.4cm, 0cm) (q2) {\(q_2\)};
        \node[smallstate] at (1cm, 0cm) (q) {};

        \path[->] (q1) edge [bend left,above] node {\(\ldec_i\cdot\linc_j\)} (q)
                  (q) edge [loop below] node {\(\ldec_i\cdot\linc_j\)} (q)
                  (q) edge node [above] {\(\lchk_i\)} (qa)
                  (qa) edge [loop below] node {\(\linc_i\cdot\ldec_j\)} (qa)
                  (qa) edge node [above] {\(\lchk_j\)} (q2);
      \end{scope}
  \end{smallautomaton}
\end{center}
It is easily checked that upon reaching state \(q_2\), the \(i\)-th counter is
restored to its value in \(q_1\), the \(j\)-th is 0, and the state can only be reached
if the \(i\)-th counter were strictly positive.\qed
\end{proof}

\section{\ccra* and weighted automata}\label{sec:wa}

With the plethora of models computing functions from words to values in modern
literature, it is imperative to justify studying the seemingly artificial \ccra.
In this section, we provide a normal form that will demonstrate that these
machines are but deterministic weighted automata with a small dose of
nondeterminism.  In particular, all the problems we show to be undecidable in
\Cref{sec:app} turn out to be decidable for deterministic (or even
finitely-ambiguous) weighted automata; this gives credence to the assertion that
\nccra is one of the weakest models for which equivalence, for instance, is
undecidable.




In the following proposition, it is shown that any \kccra can be expressed as a
DFA making nondeterministic jumps into a \bbK-DetWA; graphically, every \kccra is
equivalent to:

\begin{center}
  \begin{smallautomaton}\label{fig:nf}
    \node[circle, draw, minimum width=1.7cm] {DFA};
    \node[smallstate, initial] at (-0.6cm, 0) (ai) {};
    \node[smallstate] at (0, 0.5cm) (a1) {};
    \node[smallstate] at (0.1cm, -0.4cm) (a2) {};
    \node[smallstate] at (0.5cm, -0.1cm) (a3) {};

    \node at (-0.9cm,0.9cm) {\cA};

    \path[->] (a2) edge [bend right] node [below] {\scriptsize \(a\)} (a3)
              (ai) edge [bend left] node [above=.01cm,pos=0.2] {\scriptsize \(b\)} +(0.3cm, 0.3cm)
              (ai) edge [bend right] node [below=.01cm,pos=0.2] {\scriptsize \(a\)} +(0.3cm, -0.3cm);

    \begin{scope}[xshift=2.4cm]
      \node[circle, draw, minimum width=1.7cm] (b) {DetWA};
      \node[smallstate] at (-0.1cm, 0.55cm) (b1) {};
      \node[smallstate,accepting] at (-0.2cm, -0.4cm) (b2) {};
      \node[smallstate] at (0.5cm, -0.3cm) (b3) {};

      \path[->] (b1) edge node [above=0.1cm,sloped] (x) {} +(.5cm, -.35cm);
      \node at (x) {\scriptsize \(a,\!1\)};
      \node at (0.9cm,0.9cm) {\cW};
    \end{scope}

    \path[->] (a3) edge node [above,pos=0.4] {\scriptsize \(\eps\)} (b2)
              (a3) edge [bend right=50] node [below, pos=0.4] {\scriptsize \(\eps\)} (b3)
              (a1) edge node [above] {\scriptsize \(\eps\)} (b1);
    
  \end{smallautomaton}
\end{center}
\begin{proposition}\label{prop:nf}
  Let \(f\colon A^* \to \bbK\) be a \kccra.  There are a DFA
  \(\cA\) with state set \(Q\) and initial state \(q_0\), a \bbK_\detwa \cW with
  state set \(Q'\), and a function \(\eta\colon Q \to \pow(Q')\) such that:
  \[(\forall w \in A^*)\qquad f(w) = \min \{\cW^q(v) \mid w = uv \land q \in \eta(q_0.u)\}\enspace,\]
  where \(\cW^q\) is \cW with the initial state set to \(q\), and \(q_0.u\) is the
  state reached by reading \(u\) in \(\cA\).
\end{proposition}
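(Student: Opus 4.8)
The plan is to unfold the tropical semantics of the \kccra into a minimum over \emph{dataflow paths}, and to exploit copylessness to argue that each such path evolves \emph{deterministically} when read forward in time. The prefix consumed by \(\cA\), together with the nondeterministic jump \(\eta\), will select \emph{where} a path starts (its source), and \(\cW\) will simply replay that path, accumulating its weight. Throughout, let \(\cC\) run on \(w = w_1\cdots w_n\) through DFA states \(q_0 = p_0, p_1, \dots, p_n\), write \(R^{(t)}\) for the register vector after \(t\) letters, and let \(m_{i,j}\), \(m_{k+1,j}\), \(m_i\), \(m_{k+1}\) denote, respectively, the flow coefficients, the injected constants, the output coefficients, and the output constant (the first two depending on the transition taken).

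\emph{Step 1 (semantics).} Expanding the recurrence \(R^{(t)}_i = \min_j\{R^{(t-1)}_j + m_{j,i}\}\) together with the injected constant \(m_{k+1,i}\), I would show that each value \(R^{(n)}_i\) is a minimum, over ``back-paths'' ending at \((n,i)\), of a \emph{source value} plus the flow coefficients accumulated along the path; a source is either the initial value of some register (at time \(0\)) or a constant \(m_{k+1,j}\) injected into register \(j\) on some transition. Hence \(f(w)\) is the minimum, over all sources, of \([\text{source value} + \text{accumulated flow} + m_i]\), together with the term \(m_{k+1}\) contributed by the output constant at \(p_n\) (when \(p_n\in F\)). \emph{Step 2 (copylessness \(\Rightarrow\) forward determinism).} The crux is that copylessness is precisely the statement that each register feeds at most one register at the next step: for every \(i\) at most one \(m_{i,j}\) is finite. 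Therefore, reading a path \emph{forward} from its source, the register carrying the value, and the coefficient added, are uniquely determined by the letter read at each step (and the path simply dies if the value is discarded). Backward, several paths may merge into the same register, which is exactly the \(\min\); but a single path, read forward, is deterministic, and this is what allows a \emph{deterministic} weighted automaton to replay it.

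\emph{Step 3 (construction).} Take \(\cA\) to be the DFA underlying \(\cC\) (its acceptance is irrelevant to the formula), so \(q_0.u = p_{|u|}\). Give \(\cW\) the ``flow'' states \(Q\times[k]\) with the deterministic transition \((p,i)\xrightarrow{a,\,m_{i,i'}}(p',i')\), where \(p'=\delta(p,a)\) and \(i'\) is the \emph{unique} target of \(i\) (no transition if \(i\) is discarded), and final weight \(m_i\) at \((p,i)\) when \(p\in F\). Because a \kdetwa carries a single initial weight, the differing source values cannot be placed ``at time \(0\)''; instead I inject each source value on the \emph{first} transition out of a dedicated entry state and set \(\lambda = 0\). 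Concretely, add: states \(e^{\mathrm{inj}}_{p,a,j}\) that fire only on \(a\), moving to \((p',j)\) with weight \(m_{k+1,j}\); states \(e^{\mathrm{init}}_i\) that on \(a\) move to \((p',i')\) with weight \((\text{initial value of }r_i)+m_{i,i'}\); and states \(\hat p\) tracking only the control state with zero weights, with final weight \(m_{k+1}\) at \(\hat p\) when \(p\in F\), to carry the output constant. The jump sends each \(p\) to the \(e^{\mathrm{inj}}_{p,a,j}\) it enables, and sends \(q_0\) additionally to every \(e^{\mathrm{init}}_i\) and to \(\hat{q_0}\); entry states are made final (with the appropriate weight) exactly when \(q_0\in F\), to cover \(w=\eps\).

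\emph{Step 4 (verification).} I would then check that for every decomposition \(w=uv\) and every \(q\in\eta(q_0.u)\) the unique run of \(\cW^q\) on \(v\) reproduces exactly one source term of the Step 1 expansion (an injection with \(u\) the prefix \emph{before} the injecting letter, an initial value or the output constant with \(u=\eps\)), and conversely that every source term arises for exactly one such pair; runs that die contribute \(\infty\) and are harmlessly absorbed by the outer \(\min\), matching sources whose flow never reaches time \(n\). Determinism of \(\cW\) follows from Step 2 together with the fact that each entry state fires on a single letter. The main obstacle is Steps 1--2: making the dataflow decomposition precise and proving that copylessness yields \emph{forward} determinism of individual paths; the secondary nuisance, already handled above, is the single initial weight of a \detwa, which forces the use of entry states and hence a state set larger than \(Q\times[k]\).
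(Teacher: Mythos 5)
Your proposal is correct and follows essentially the same route as the paper: copylessness makes the forward flow of each register deterministic, so a \kdetwa with flow states \(Q\times[k]\) can replay a single register's history, and the nondeterministic jump \(\eta\) merely selects the source of that history. The only divergence is in how source values are handled: where you add entry states to inject initial values and constants on the first transition of \(\cW\) (to get around the single initial weight of a \kdetwa), the paper first normalizes the \kccra by standard techniques (initial values in \(\{0,\infty\}\), constants appearing only as resets \(r_i\leftarrow 0\)) so that every source is \(0\), and then records in \(\cA\)'s state which registers were just reset (giving \(\cA\) state set \(Q\times\pow([k])\)), which removes the need for your extra entry states.
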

\begin{proof}
  We first sketch the proof idea.  Consider a nondeterministic variant of a
  given \kccra \(\cC\) where updates of the form
  \(r_1 \leftarrow \min\{r_2, r_3\}\) become nondeterministic jumps between the updates
  \(r_1 \leftarrow r_2\) and \(r_1 \leftarrow r_3\).  The final value of this variant is set to be the
  minimum output of any run.  Then this variant has the same output value as the
  original CRA, by distributivity of addition over \(\min\).  We implement that
  strategy using a DFA \(\cA\) which, on resets (\(r_1 \leftarrow 0\)), starts a new run
  within a DetWA that follows the increments (\(r_1 \leftarrow r_1 + m\)) and movements
  (\(r_i \leftarrow r_1\)) of the register.

  We now formalize the definition of CRA.  A \kccra \(\cC\) of dimension \(k\) is a
  tuple \((\cA', \vec{\lambda}, \mu, \nu)\) where \(\cA' = (Q, A, \delta, q_0, F)\) is a DFA,
  \(\vec{\lambda} \in \bbK^{1\times k}\) is the initial value of the \(k\) registers,
  \(\nu\colon F \to \bbK^{(k+1)\times1}\) gives the output function for each final state,
  and \(\mu\colon Q \times A \to \bbK^{(k+1)\times(k+1)}\) provides the update
  functions.  To compare with the definition on page~\pageref{def:cra}, using
  the notation therein, \(\mu(q, a)\) is:
  \[
    \begin{pmatrix}
      m_{1, 1} & m_{1, 2} & \cdots & m_{1, k} & \infty\\
      m_{2, 1} & m_{2, 2} & \cdots & m_{2, k} & \infty\\
      \vdots & \vdots & \ddots & \vdots\\
      m_{k, 1} & m_{k, 2} & \cdots & m_{k, k} & \infty\\
      m_{k+1, 1} & m_{k+1, 2} & \cdots & m_{k+1, k} & 0
    \end{pmatrix}
  \]
  It can be readily checked that \((\vec{r}', 0) = (\vec{r}, 0)\cdot\mu(q, a)\) indeed
  satisfies, for all \(i \in [k]\) that:
  \[r_i' = \min\{r_1 + m_{1, i},\; r_2 + m_{2, i},\; \ldots,\; r_k+m_{k, i},\;
    m_{k+1,i}\}\enspace.\]%
  (Recalling that the multiplication is made in the
  semiring \((\bbK, \min, +)\).)
  Note that the \((k+1)\)-th component is a virtual register that will be
  maintained to \(0\).  Given an accepting run
  \((q_0, w_0, q_1)\cdot(q_1, w_1, q_2)\cdots(q_n,w_n, q_{n+1}) \in (Q \times A)^*\) in
  \(\cA'\), the output value is then defined as:
  \[\cC(w_0w_1\cdots w_n) = \vec{\lambda} \cdot \mu(q_0, w_0)\cdot\mu(q_1, w_1)\cdots \mu(q_n, w_n)\cdot
    \nu(q_{n+1})\enspace.\]

  Now that the precise definition of \kcra is settled, we present the
  construction.  We will assume that \(\vec{\lambda} \in \{0, \infty\}^k\) and that the updates are
  in one of two possible forms:
  \begin{itemize}
  \item \(r_i \leftarrow \min\{r_1 + m_1, r_2 + m_2 \ldots, r_k + m_k\}\), that is, no
    constant term appears;
  \item \(r_i \leftarrow 0\).
  \end{itemize}
  In symbols, this means that if \(\mu(q, a) = (m_{i,j})\), then for any \(i \in [k]\), either
  \(m_{k+1, i}\) is \(\infty\) or all \(m_{j, i}\), for \(j \in [k]\), are
  \(\infty\).  Any \(\kccra\) can be put under that form using standard techniques.
  
  The automaton \(\cA\) is the underlying automaton of \(\cC\), augmented with the
  information of \emph{which registers} were reset by the previous transition.
  More precisely, \(\cA = (Q \times \cP([k]), A, \delta_\cA, q_0', \emptyset)\) where
  \(q_0' = (q_0, \{i \mid \lambda_i = 0\})\); note that the final states are irrelevant.
  The transition function \(\delta_\cA\) is defined by:
  \[\delta_\cA((q, \cdot), a) = (\delta(q, a), E) \quad\text{where } E
    = \{i \mid \mu(q, a)_{k+1, i} = 0\}\enspace.\]
  
  The \bbK_\detwa \cW consists of \(k\) copies of \(\cC\), one for each register.
  Formally, \(\cW = (\cB,\vec{0}, \mu_\cW, \nu_\cW)\) with
  \(\cB= (Q\times[k], A, \delta_\cB, (q_0,1), F \times [k])\); here, the initial valuation is
  irrelevant.  We now define the transition function \(\delta_\cB\) and the weight
  function \(\mu_\cW\).  Let \((q, x)\) be a state of \(\cB\) and \(a \in A\).  By
  copylessness, there is at most one \(y\) such that \(\mu(q,a)_{x, y}\) is not
  \(\infty\).  If one such \(y\) exists, then:
  \begin{align*}
    \delta_\cB((q, x), a) & = (\delta(q, a), y)\\
    \mu_\cW((q, x), a) & = \mu(q,a)_{x, y}\enspace.
  \end{align*}
  The output function of \(\cW\) is then, for any \(q \in Q, i \in [k]\),
  \(\nu_\cW(q,i) = \nu(q)_i\).

  Finally, \(\eta\colon Q \times \cP([k]) \to \cP(Q \times [k])\) is defined as
  \(\eta(q, E) = \{(q, i) \mid i \in E\}\).

  Consider a word \(w \in A^*\), and a factorization \(w = uv\).  The word \(u\) reaches
  a state \(q\) in \(\cC\), and a state \((q, E)\) in \(\cA\).  The last transition
  taken in \(\cC\) reading \(u\) updated all the registers \(r_i, i \in E,\) with the
  value 0.  For each of these \(i\)'s, there will be a run over \(v\) in
  \(\cW\), starting at \((q, i)\), which follows the updates applied to \(r_i\).  This
  process thus simulates the nondeterministic variant of \(\cC\) described above,
  showing the Proposition.\qed
\end{proof}

\begin{corollary}\label{cor:lin}
  \(\kccra \subseteq \text{\bbK_\linwa}\).
\end{corollary}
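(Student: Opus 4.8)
The plan is to use the normal form of \Cref{prop:nf} and simply glue its three ingredients---the DFA \(\cA\), the \kdetwa \(\cW\), and the map \(\eta\)---into a single weighted automaton \(\cM\) laid out exactly as in the picture on page~\pageref{fig:nf}. Concretely, I would take as state set the disjoint union of the states of \(\cA\) and those of \(\cW\); declare the initial state of \(\cM\) to be the initial state \(q_0'\) of \(\cA\), with initial weight \(0\); and give weight \(0\) to every transition inherited from \(\cA\). For each state \(s\) of \(\cA\) and each \(q \in \eta(s)\) I would add an \(\eps\)-transition \(s \to q\) of weight \(0\); inside the \(\cW\)-part I would keep the transitions, the weight function \(\mu_\cW\), and the output function \(\nu_\cW\) of \(\cW\) unchanged, and take the final states of \(\cM\) to be those of \(\cW\).

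The first thing to check is that \(\cM\) computes \(f\). An accepting run of \(\cM\) over a word \(w\) must read some prefix \(u\) deterministically inside \(\cA\), reaching the state \(q_0.u\); take a single \(\eps\)-jump to some \(q \in \eta(q_0.u)\); and then read the remaining suffix \(v\) (with \(w = uv\)) deterministically inside \(\cW\), ending in a final state. Since the \(\cA\)-part and the jump carry weight \(0\), the weight of such a run is exactly \(\cW^q(v)\), and such a run exists precisely when \(\cW^q(v)\) is finite. Conversely, every pair \((u, q)\) with \(w = uv\) and \(q \in \eta(q_0.u)\) for which \(\cW^q(v)\) is finite gives rise to exactly one such run. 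Minimizing over all accepting runs therefore yields \(\min\{\cW^q(v) \mid w = uv \land q \in \eta(q_0.u)\}\), which equals \(f(w)\) by \Cref{prop:nf}.

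It then remains to bound the ambiguity. I would argue that an accepting run of \(\cM\) over \(w\) is completely determined by the position at which the \(\eps\)-jump occurs---equivalently, the length of \(u\), giving at most \(|w| + 1\) possibilities---together with the target state \(q \in \eta(q_0.u)\) of that jump, of which there are at most \(|Q'|\); once both are fixed, the \(\cA\)-prefix and the \(\cW\)-suffix are forced by determinism. Hence \(w\) has at most \((|w| + 1)\cdot|Q'|\) accepting runs, which is linear in \(|w|\), so \(\cM\) is a \klinwa. (Only the empty word, on which this literal bound degenerates, needs separate care; its value is a single constant and can be accommodated without affecting the asymptotic count.)

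The construction itself is routine once \Cref{prop:nf} is in hand; the only real point requiring attention is the ambiguity count. The crux is that the nondeterminism of \(\cM\) is confined to a single \(\eps\)-jump after a deterministic prefix, with a deterministic suffix afterwards: if runs could re-enter \(\cA\) or take several jumps, the number of runs would grow polynomially or worse. Preserving this ``jump once, then stay deterministic'' shape of the runs is thus exactly what turns the normal form into a \emph{linearly}-ambiguous automaton, and this is the step I would be most careful to justify.
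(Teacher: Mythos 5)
Your proposal is correct and follows essentially the same route as the paper: it glues \(\cA\), \(\eta\), and \(\cW\) into one weighted automaton with zero weights on the \(\cA\)-part and bounds the ambiguity by the choice of jump position times \(|Q'|\). You are in fact slightly more careful than the paper, which states the bound as \(|w|\times|Q'|\) and does not remark on the empty-word edge case.
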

\begin{proof}
  With the notations of \Cref{prop:nf}, let us see \(\cA, \eta,\) and \(\cW\) as a
  single \bbK_WA, where the weights in the \(\cA\) part are set to 0.  For any word
  \(w\), each run on \(w\) consists of a run over a prefix \(u\) within \(\cA\), and a
  run over the leftover suffix~\(v\) within \(\cW\) starting in some state
  \(q \in \eta(q_0.u)\).  Thus there are at most \(|w| \times |Q'|\) runs, hence the WA is
  linearly ambiguous.\qed
\end{proof}



As an application of this specific form, it is not hard to show that some
specific functions are not expressible using a \zccra.  Let \(\minblock\)
(\resp~\lastblock) be the function from \(\{a, \#\}^*\) to \(\bbN\) which, given
\(w = \#a^{n_1}\#a^{n_2}\#\cdots\#a^{n_k}\#\) returns
\(\min \{n_i\}_{i \in [k]}\) (\resp \(n_k\)):
\begin{proposition}\label{prop:unexp}
  The following functions are not expressible by a \zccra:
  \begin{itemize}
  \item \(c^i \cdot w \mapsto i + \minblock(w)\), with \(w \in \{a, \#\}^*\);
  \item \(u \cdot \$ \cdot v \mapsto \lastblock(u) + \lastblock(v)\), with \(u, v \in \{a, \#\}^*\).
  \end{itemize}
\end{proposition}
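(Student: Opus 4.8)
The plan is to leverage the normal form of \Cref{prop:nf}: writing any \zccra as $f(w) = \min\{\cW^q(v) \mid w = uv \land q \in \eta(q_0.u)\}$, the value depends only on a \emph{suffix} $v$ read deterministically by the \bbK_\detwa $\cW$, while the prefix $u$ contributes nothing but the finite choice of starting states $\eta(q_0.u) \subseteq Q'$. I would prove both non-expressibility results by the same two-move strategy: first, blow up the part of the input that is \emph{not} the relevant suffix so as to kill all ``wrong'' runs, reducing $f$ to a \emph{bounded} minimum of single deterministic runs; then derive a contradiction from the fact that such a bounded minimum can realize neither an unbounded-arity $\min$ (for \minblock) nor an unbounded-position ``last block'' detection (for \lastblock).

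For the first function, suppose it is computed by such an $(\cA, \cW, \eta)$ with $\cW$ on state set $Q'$. Fix a word $w = \#a^{n_1}\#\cdots\#a^{n_N}\#$ with $N > |Q'|$ blocks and feed $c^i w$ for a large $i$ lying in a residue class where $q_0.c^j$ has stabilized to its eventual period. The key observation is that any run whose jump point lies inside the $w$-part reads a suffix of $w$, so its value is independent of $i$; since $f(c^iw) = i + \minblock(w) \to \infty$, every such run must evaluate to $\infty$. Hence for large $i$ the minimum is carried entirely by runs that jump inside the $c$-block and read $c^{i-j}w$. Grouping these by the (finitely many) states $s \in Q'$ in which $\cW$ enters the $w$-part, and noting that reading $c$'s contributes an $i$-dependent but $w$-independent amount, I obtain $\minblock(w) = \min_{s \in Q'}\,[\,d_s + h_s(w)\,]$, where each $h_s(w)$ is the weight of the single deterministic run of $\cW$ over $w$ started in $s$. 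Restricting the block lengths to large multiples of the common period of $\cW$ turns each $h_s$ into an \emph{affine} function of $(n_1, \ldots, n_N)$, so $\min(n_1, \ldots, n_N)$ would be a minimum of at most $|Q'| < N$ affine functions; this is impossible, since matching $\min$ on the region where coordinate $j$ is the strict minimum forces a distinct piece equal to $n_j$ for each $j$.

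For the second function I would proceed symmetrically. Take $u = \#a^m\#$ and $v = \#a^{T_1}\#\cdots\#a^{T_k}\#$ with $k > |Q'|$ and the $T_j$ large, independent multiples of $\cW$'s period, so that $\lastblock(u) = m$ and $\lastblock(v) = T_k$. Letting $m \to \infty$, every run that jumps inside $v$ has a value independent of $m$ and must be $\infty$; thus the value $m + T_k$ is carried by runs that jump inside the $a^m$ block of $u$ and then read all of $v$. Since the coefficient of $m$ in $f$ is exactly $1$, the surviving runs start at the beginning of the $a^m$ block and accumulate $m$ at rate $1$; after the fixed entry into $v$ they become single deterministic runs of $\cW$ over $v$ from one of finitely many states $s \in Q'$, giving $\lastblock(v) = \min_{s \in Q'} g_s(v)$ with $g_s(v) = \sum_{j=1}^{k} \rho^s_j T_j + d_s$ affine in the block lengths. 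The lower bound $g_s(v) \geq T_k$ over all large \emph{independent} $T_j$ forces $\rho^s_j = 0$ for $j \neq k$ and $\rho^s_k = 1$, $d_s = 0$ for whichever piece $s$ attains equality. But $\rho^s_j$ is determined by the (eventually periodic) state in which run $s$ enters block $j$, and a run with rate $0$ on blocks $1, \ldots, k-1$ and rate $1$ on block $k$ can serve only the single value of $k$ equal to its first rate-$1$ position; with more than $|Q'|$ values of $k$ this is impossible.

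The main obstacle, in both cases, is the \emph{min-over-runs}: I must rule out the runs that would otherwise ``cheat'' by reading only the suffix carrying the block information, which would give a value that is too small. The device for this is to make the complementary part of the input ($c^i$, resp.\ $a^m$) grow without bound, using that its contribution to the value of a suffix-run is either absent or, by \Cref{cor:lin} and the determinism of $\cW$, of a fixed slope; this forces all such runs to $\infty$ and cleanly reduces each statement to the impossibility of a bounded minimum of affine functions computing $\min$ or locating an unbounded ``last'' coordinate.
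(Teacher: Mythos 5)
Your proposal is correct and follows essentially the same route as the paper's (sketched) proof: both rest on the normal form of \Cref{prop:nf} and an analysis of where the single nondeterministic jump into the \detwa part can be made in a minimal run, forcing it into the prefix ($c^i$, resp.\ $a^m$) by letting that prefix grow. The one thing you add is a rigorous version of the step the paper leaves implicit in ``\minblock and \lastblock are not computable by a \detwa'' --- namely that even a minimum over the at most $|Q'|$ deterministic continuations cannot do the job, which you obtain by pitting more than $|Q'|$ blocks against a minimum of at most $|Q'|$ affine functions; this is exactly the right way to close that gap.
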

\begin{proof}[sketch]
  In both cases, one has to reason about when the nondeterministic jump, given
  by \(\eta\) in \Cref{prop:nf}, is made in the minimal run, bearing in mind that
  neither \(\minblock\) nor \(\lastblock\) are computable by a DetWA.

  For the first example, the jump has to be made at the beginning of the minimal
  block of \(a\)'s, after reading a \(\#\); thus the number of \(c\)'s cannot be taken
  into account.  For the second example, if the jump is made just before the
  last block of \(a\)'s in \(v\), then the value of the last block in \(u\) is
  disregarded.  If it is made just before the last block in \(u\), then the DetWA
  part has to compute \(\lastblock\) on \(v\), which is not possible.\qed
\end{proof}

\begin{remark}
  Note that the first function of \Cref{prop:unexp} is expressible by a LinWA,
  and the second by an unambiguous WA (i.e., at most one run per accepted word).
  Moreover, since \(\minblock\) is not expressible by an unambiguous WA but is by
  a \ccra (see the Introduction), the classes of functions expressed by the two
  models are incomparable.  We also remark that \Cref{prop:nf} and
  \Cref{cor:lin} hold for any semiring.
\end{remark}



\section{Simulation of \zvz* using \zccra*}

Let \(\cV\) be a \zvz and \(q\) a state of \(\cV\).  Recall that \(C_k\) is the update
alphabet of symbols \(\linc_i,\ldec_i,\) and \(\lchk_i\), and that
\(L_{\cV, q} \subseteq (C_k)^*\) is the reachability language of \(q\).  In this section, we
devise a simulation of \(\cV\) using \zccra in the following sense: Given a word
\(w \in (C_k)^*\), the \zccra will output 0 iff \(w \in L_{\cV, q}\).

Compared to the simulation by \nccra of the forthcoming \Cref{sec:nccra}, the
\bbZ case is quite straightforward, and reminiscent of the methodology
of~\cite{almagor-boker-kupferman11}; it however provides some intuition for the
construction for \bbN.

We present how the counter increments (\(\linc\)), decrements (\(\ldec\)), and
zero-tests (\(\lchk\)) are implemented for a single counter before showing how
multiple counters can be handled.  The automaton structure of the source \zvz,
with accepting state \(q\), can then be followed by the CRA while simulating the
counters.

\subsection{Simulation of a single counter}

Since we are working with a single counter, we drop the indices of the letters
in~\(C_1\).  A single counter \(c\) will be simulated by 3 registers: \(\rp\) and
\(\rn\), carrying the values of \(c\) and \(-c\), respectively, and \(\rchk\) which
shall be 0 if each time the letter \(\lchk\) was read, \(c\) was 0.  If at any time
\(\lchk\) was read while~\(c\) was nonzero, then \(\rchk\) will be strictly smaller
than 0.  This is implemented as follows:
\[\linc\colon\begin{update}
    \rp & \rp + 1\\
    \rn & \rn - 1\\
    \rchk & \rchk
  \end{update}\quad
  \ldec\colon\begin{update}
    \rp & \rp - 1\\
    \rn & \rn + 1\\
    \rchk & \rchk
  \end{update}\quad
  \lchk\colon\begin{update}
    \rp & 0\\
    \rn & 0\\
    \rchk & \min \{\rchk, \rp, \rn\}
  \end{update}
\]

\begin{observation}
  If \(\rchk\) becomes strictly smaller than 0, it will stay so after reading any
  word in \((C_1)^*\).
\end{observation}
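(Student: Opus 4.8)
The plan is to establish the stronger fact that \(\rchk\) is \emph{non-increasing} along every run, from which the observation is immediate. I would first isolate the behavior of \(\rchk\) under each of the three update rules in isolation, ignoring \(\rp\) and \(\rn\) except insofar as they enter the update of \(\rchk\). The key syntactic point to extract is that in the only rule that modifies \(\rchk\)—namely \(\lchk\)—the register \(\rchk\) itself appears among the operands of the \(\min\).

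Concretely, I would carry out a one-letter case analysis. For the letters \(\linc\) and \(\ldec\), the update copies \(\rchk\) to itself, so its value is unchanged. For \(\lchk\), the new value is \(\min\{\rchk, \rp, \rn\}\); since \(\rchk\) occurs as one of the arguments of the minimum, the result is at most the old value of \(\rchk\). Hence no single letter of \(C_1\) can increase \(\rchk\). The observation then follows by a trivial induction on the length of the input word: reading \(\eps\) leaves \(\rchk\) unchanged, and reading one additional letter can, by the case analysis, only leave \(\rchk\) equal or decrease it. Therefore, if \(\rchk < 0\) at some point of a run, it remains \(< 0\) after reading any word in \((C_1)^*\).

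There is essentially no obstacle here: the entire content of the claim is the observation that \(\rchk\) appears as an operand of the \(\min\) in the \(\lchk\) update, which is exactly what forces the register to be monotone. The only point that merits explicit mention is that \(\linc\) and \(\ldec\) leave \(\rchk\) untouched, so that monotonicity holds for \emph{every} letter of \(C_1\) and not merely for \(\lchk\); without remarking on those two letters, the induction would have a gap.
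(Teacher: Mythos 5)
Your argument is correct and is exactly the (implicit) reasoning behind the paper's unproved Observation: \(\linc\) and \(\ldec\) leave \(\rchk\) unchanged, and the \(\lchk\) update takes a \(\min\) with \(\rchk\) itself as an operand, so \(\rchk\) is non-increasing along every run and strict negativity persists. Nothing further is needed.
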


\begin{observation}
  Assume \(\rp = \rn = \rchk = 0\).  After reading \(i\) letters \(\linc\) and
  \(j\)~letters \(\ldec\), in any order, then reading a final \(\lchk\), the new
  values of the registers satisfy:
  \begin{enumerate}
  \item If \(i = j\), then \(\rp = \rn = \rchk = 0\);
  \item Otherwise \(\rchk < 0\).
  \end{enumerate}
\end{observation}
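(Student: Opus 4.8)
The plan is to track the three register values symbolically through the $i+j$ update letters and the final $\lchk$, exploiting the fact that $\linc$ and $\ldec$ act on $\rp$ and $\rn$ purely by constant additions while leaving $\rchk$ untouched.

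First I would observe that neither $\linc$ nor $\ldec$ modifies $\rchk$, so throughout the reading of the $i$ increments and $j$ decrements the register $\rchk$ retains its initial value $0$. Next, since each $\linc$ adds $+1$ to $\rp$ and $-1$ to $\rn$, and each $\ldec$ does the reverse, the combined effect on $\rp$ and $\rn$ is a sum of constant increments. Hence---and this is the one point worth stating explicitly---the final values of $\rp$ and $\rn$ depend only on the counts $i$ and $j$ and not on the order in which the letters are interleaved: after the $i+j$ letters one has $\rp = i-j$ and $\rn = j-i$, while $\rchk = 0$.

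Then I would apply the final $\lchk$ transition. It resets $\rp$ and $\rn$ to $0$ and sets $\rchk$ to $\min\{\rchk, \rp, \rn\} = \min\{0,\, i-j,\, j-i\}$. A short case analysis closes the argument: if $i = j$ then all three arguments of the $\min$ are $0$, so $\rchk = 0$ and together with $\rp = \rn = 0$ we are in case~(1); if $i \neq j$ then exactly one of $i-j$ and $j-i$ is strictly negative, so the $\min$ is strictly negative, giving $\rchk < 0$ as in case~(2).

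There is no genuine obstacle here; the only subtlety is the order-independence of the final $\rp, \rn$, which is immediate from the additivity of the $\linc$ and $\ldec$ updates and is what justifies the phrase ``in any order.'' One may also note that once $\rchk < 0$ this is preserved under any further reading by the preceding Observation, so the negative value faithfully records a failed zero-test.
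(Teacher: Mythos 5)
Your proof is correct: the explicit computation of $\rp = i-j$, $\rn = j-i$, $\rchk = 0$ followed by the case analysis on $\min\{0,\, i-j,\, j-i\}$ is exactly the intended argument, and it is the same calculation the paper writes out for the analogous Lemma in the $\bbN$-simulation (the paper leaves this particular Observation unproven as immediate). The remark on order-independence is a nice touch but, as you say, follows directly from the additivity of the updates.
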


This simulates the original counter in the following sense:
\begin{proposition}
  Let \(\cV\) be a \zvz of dimension \(1\) and \(q\) a state of \(\cV\). There is a
  \zccra \(\cC\) with \(\cC(w) \leq 0\) for any \(w\) and such that:
  \[(\forall w \in (C_1)^*)\quad w \in L_{\cV, q} \Leftrightarrow \cC(w) = 0\enspace.\]
\end{proposition}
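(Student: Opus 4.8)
The plan is to let the underlying DFA of $\cC$ be exactly that of $\cV$, completed into a total deterministic automaton by adding a sink state, and to declare \emph{every} state accepting; the three registers $\rp, \rn, \rchk$ are initialised to $0$ and updated on the letters $\linc, \ldec, \lchk$ precisely as displayed above (the sink state reuses the same update rules so that its registers keep obeying the invariant below). The output at the distinguished state $q$ is simply $\rchk$, while the output at every other state is $\rchk - 1$. Before anything else I would verify that this is a legitimate $\zccra$: each of the three update functions feeds every register into at most one target slot, so $\cC$ is copyless, and the two output shapes $\rchk$ and $\rchk - 1$ both fit the allowed form $\min\{\ldots, r_i + m_i, \ldots, m_{k+1}\}$.

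The core of the argument is an invariant maintained along the unique run of $\cC$ on a word $w$: after reading any prefix, $\rp$ equals the current ($\bbZ$-valued) counter and $\rn$ equals its negation, so that $\min\{\rp, \rn\} = -|c| \leq 0$ and hence $\rchk \leq 0$ throughout (it starts at $0$ and each update can only leave it unchanged or lower it). The crucial half of the invariant — that $\rchk = 0$ exactly when every $\lchk$ read so far was read with the counter at $0$ — I would prove by induction on the number of $\lchk$'s. Between two consecutive checks the registers $\rp, \rn$ are reset to $0$, so the intervening stretch of $\linc$'s and $\ldec$'s forms an independent block governed by the second Observation: the block-ending $\lchk$ keeps $\rchk = 0$ when the increments and decrements balance (the counter is genuinely $0$) and forces $\rchk < 0$ otherwise. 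The first Observation then guarantees that a single faulty check poisons $\rchk$ permanently, so the equivalence survives all later letters.

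With the invariant in hand the conclusion is immediate. Since $\rchk \leq 0$ always, the output is $\rchk \leq 0$ at $q$ and $\rchk - 1 < 0$ elsewhere, giving $\cC(w) \leq 0$ for every $w$. Moreover $\cC(w) = 0$ forces the run to end in $q$ (otherwise the output is strictly negative) and to have $\rchk = 0$ there; by the invariant this says precisely that the DFA of $\cV$ reaches $q$ and that every zero-test along the way was legitimate, that is, $w \in L_{\cV, q}$, and the converse implication reads off the same equivalence in reverse. The one step needing genuine care is the chaining of the two Observations across an arbitrary number of checks — in particular, that resetting $\rp, \rn$ to $0$ at each $\lchk$ renders the blocks independent — together with the routine but necessary completion of $\cV$'s DFA, which ensures that unreadable or non-reaching words land in an accepting sink with strictly negative output rather than producing the value $\infty$.
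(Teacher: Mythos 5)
Your construction is correct and is essentially the paper's: the same three registers with the same updates on \(\linc\), \(\ldec\), \(\lchk\), the same permanence-plus-balance invariant justified by the two Observations, and output \(\rchk\) at \(q\). The only deviation is that you make every state accepting, with output \(\rchk - 1\) away from \(q\) (and add a redundant sink, since the underlying automaton of a \zvz is already a total DFA); this is a harmless refinement that arguably handles the clause ``\(\cC(w) \leq 0\) for any \(w\)'' more faithfully than the paper's own one-line proof, which keeps \(q\) as the sole accepting state and therefore assigns \(\cC(w) = \infty\) to words whose run does not end in \(q\).
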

\begin{proof}
  Let \(\cV = (Q, C_1, \delta, q_0, F)\) and \(q \in Q\).  The \zccra \(\cC\) with 3
  registers is defined as having \((Q, C_1, \delta, q_0, \{q\})\) as the automaton
  structure, and the updates are dictated by the letter being read, as
  above.  On state \(q\), \(\cC\) outputs~\(\rchk\).\qed
\end{proof}

\subsection{Simulation of multiple counters}

It is quite straightforward to combine multiple \(\rchk\) registers into one.
Indeed, if \(k\) counters are simulated using registers \(\rp_i, \rn_i,\) and
\(\rchk_i\), \(i \in [k]\), then at the end of the simulation, one can set:
\[\flag \leftarrow \min \{\rchk_1, \rchk_2, \ldots, \rchk_n\}\enspace,\]
so that \(\flag\) would be 0 iff the execution saw no illegal zero-tests;
\(\flag\) is negative otherwise.

\begin{proposition}\label{prop:zsim}
  Let \(\cV\) be a \zvz of dimension \(k\) and \(q\) a state of \(\cV\).  There is a
  \zccra \(\cC\) with \(\cC(w) \leq 0\) for any \(w\) and such that:
  \[(\forall w \in (C_k)^*)\qquad w \in L_{\cV, q} \Leftrightarrow \cC(w) = 0\enspace.\]
\end{proposition}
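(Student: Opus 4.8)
The plan is to run $k$ independent copies of the single-counter gadget in parallel, on pairwise disjoint register triples, and to let the control structure of the CRA coincide with that of $\cV$. Concretely, I would take $3k$ registers $\rp_i,\rn_i,\rchk_i$ for $i\in[k]$, all initialised to $0$, and use $(Q,C_k,\delta,q_0,\{q\})$ as the underlying DFA. Reading a letter $\linc_i$, $\ldec_i$, or $\lchk_i$ triggers exactly the single-counter update of the previous subsection on the triple $(\rp_i,\rn_i,\rchk_i)$, while leaving every other register unchanged by an identity update $r\leftarrow r$. The counters thus evolve independently, each on its own block of three registers, and the whole simulation is driven by the transitions of $\cV$, so that (since $\delta$ is a total DFA function) the state reached after reading $w$ is exactly the state reached by the unique run of $\cV$ on $w$.

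For correctness I would argue counter by counter. Fix $i$ and restrict attention to the $\lchk_i$-letters: between two consecutive such letters the registers $\rp_i,\rn_i$ record the net effect of the intervening $\linc_i$ and $\ldec_i$ letters, exactly as in the single-counter analysis. By the two Observations and the single-counter proposition, $\rchk_i$ stays $0$ as long as every $\lchk_i$ is read while counter $i$ holds $0$, and drops strictly below $0$ forever after the first illegal zero-test on counter $i$; in particular each $\rchk_i$ is non-increasing and at most $0$. Taking the output at $q$ to be $\flag=\min\{\rchk_1,\ldots,\rchk_k\}$, one gets $\cC(w)=0$ precisely when the run of $\cV$ on $w$ both ends in $q$ and passes every zero-test legally, which is exactly the condition $w\in L_{\cV,q}$.

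The two points that need care are copylessness and the global bound $\cC(w)\le 0$. Copylessness holds within each triple because the single-counter updates already satisfy it, and across triples because the blocks of registers are disjoint; the only place several check registers are read together is the final combination $\min\{\rchk_1,\ldots,\rchk_k\}$, which is legitimate since it lives in the output function, on which no copyless restriction is imposed. For the bound, note that $\flag\le 0$ always holds on accepted words because every $\rchk_i$ starts at $0$ and never increases; to secure $\cC(w)\le 0$ for the remaining words, which otherwise evaluate to $\infty$, I would declare all states final and give each state $q'\ne q$ the constant output $-1$ (a legal output function, obtained by setting all register coefficients to $\infty$ and the constant to $-1$). This value is negative, hence never $0$, so it leaves the equivalence $\cC(w)=0\Leftrightarrow w\in L_{\cV,q}$ intact while forcing $\cC(w)\le 0$ for every $w$. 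I do not expect a serious obstacle here: the construction is a direct parallel composition of the single-counter gadget, and the work is entirely in checking these two invariants.
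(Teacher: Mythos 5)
Your construction is exactly the paper's: $k$ disjoint copies of the single-counter register triple $\rp_i,\rn_i,\rchk_i$, driven by the DFA $(Q,C_k,\delta,q_0,\{q\})$ of \(\cV\), combined by outputting \(\min\{\rchk_1,\ldots,\rchk_k\}\) at \(q\). The only addition is your explicit patch securing \(\cC(w)\le 0\) on words not reaching \(q\) (all states final, constant output \(-1\) elsewhere), a detail the paper glosses over; it is correct.
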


\begin{remark}\label{rmk:zreg}
  Here, we were mostly interested in having a specific output if the simulated
  execution was correct.  If we wanted, by contrast, to output one of the
  counters on correct executions, we would need one more idea; we present it
  here since it is similar to the techniques of the next section.

  Suppose that we wish to output the register \(r\) iff \(\flag\) is 0; recall that
  \(\flag\) may only be 0 or negative.  We will do so by repeatedly reading a new
  letter, and having \(r\) be the only possible \emph{even} output value, provided
  \(\flag\) is 0---no even value is produced if \(\flag\) is negative.

  We may assume that, by construction, \(\flag\) is even and \(r\) is a multiple
  of~4; we further assume that we have a register \(\rh\) that contains
  \emph{half} of \(r\)'s value.  We add the letter \(z\) to our alphabet, to be read
  at the end of the simulation; reading~\(z\) increases \(\rh\) by 2 and
  \(\flag\) by 4.  The output value is then set to:
  \[\min \{r + 1, \flag + 1, \rh\}\enspace.\]
  Write \(s\) for the value of \(r\) before reading the \(z\)'s, and \(f\) for the value
  of \(\flag\).  After reading \(i\) letters \(z\), the new values of the registers are:
  \[r = s,\quad\flag = f + 4\times i,\quad\rh = {s \over 2} + 2 \times i\enspace.\] For an even
  output to be produced, \(\rh\) has to be minimal.  If \(f\) is 0, this happens
  only when \(i = {s \over 4}\), and the output is then \(s\).  If \(f\) is negative,
  then \(\flag < \rh\) for \(i \leq {s \over 4}\) and \(r < \rh\) for larger values of
  \(i\); in that case, no even output value is produced.  This is illustrated in
  the following graphics, where \(s = 4\), and the left-hand side depicts
  the case \(f = 0\), while, in the right-hand side, \(f = -2\).
  \begin{center}
    \tikzfading[name=fade right, left color=transparent!0, right color=transparent!100]
    \tikzfading[name=fade left, left color=transparent!100, right color=transparent!0]
    \begin{tikzpicture}[scale=0.4]
      \clip (-1.7,-2) rectangle (23.5,8.5);
      \foreach \x in {0, ..., 9} {
        \draw node at (-1.5, \x) {\scriptsize\(\x\)};
      }

      \foreach \vflag/\vfpos/\vmin/\shift/\gl/\gr in {1/2/4/0/-0.5/11.5,-1/4/3/13cm/-1.5/10.5} {
        \begin{scope}[xshift=\shift]
          \clip (-1.5, -1.5) rectangle (11.5, 10.5);
          \draw[ystep=1.0,xstep=4.0,backstate!70,thin] (\gl,-1.5) grid (\gr,10.5);
          \foreach \x in {2, 6, 10} {
            \draw node at (\x, -0.5) {\(z\)};
          }

          \draw[densely dashdotted, semithick, red!60!black] (-.3, 2) -- ++(.3, 0) -- ++(10, 5);
          \draw node at (0.8, 3.2) {\(\rh\)};

          \draw[dashdotted, semithick, green!60!black] (-.3, 5) -- ++(.3, 0) -- ++(10, 0);
          \draw node at (1.5, 5.5) {\(r + 1\)};

          \draw[densely dotted,semithick, blue!60!black] (-.3, \vflag) -- ++(0.3, 0) -- ++(10, 10);
          \draw node at (\vfpos, 0.8) {\(\flag + 1\)};

          \draw[red, thick] (0, 6.2) node [above,fill=white] {min} -- (0, \vflag) node [inner sep=1,
          circle, fill=red] {};
          \draw[red, thick] (4, 6.2) node [above,fill=white] {min} -- (4, \vmin) node [inner sep=1,
          circle, fill=red] {};
          \ifnum \vmin=4
            \path[shorten <=1pt,<-] (4, 4) node[inner sep=.1cm,draw,circle] {} edge [bend right] (5,3);
            \node at (6, 2.7) {even};
          \fi
          \draw[red, thick] (8, 6.2) node [above,fill=white] {min} -- (8, 5) node [inner sep=1,
          circle, fill=red] {};
        \end{scope}
      }
      \draw (-0.2, 0) -- (30,0);
      \fill[white,path fading=fade left] (10.5,15) rectangle (11.5,-2);
      \fill[white,path fading=fade right] (11.5,15) rectangle (12.5,-2);
      \draw (11.5, -10) edge +(0, 30);
    \end{tikzpicture}
  \end{center}
\end{remark}

\section{Simulation of \zvz* using \nccra*}\label{sec:nccra}

Let \(\cV\) be a \zvz and \(q\) a state of \(\cV\).  In this section, we devise a
simulation of \(\cV\) using \nccra in the following sense: Given a word
\(w \in (C_k)^*\), the \nccra will output an \emph{even value} iff \(w \in L_{\cV, q}\).

Translating the strategy for \(\bbZ\) to the \bbN setting turns out to be a
nontrivial matter.  Indeed, one might expect that it would be enough to increase
the updates so that no negative number appears therein.  This would contribute a
linear blowup to the values, but does not seem to change the overall behavior.
However, the resets made while reading \(\lchk_i\) would have to be equal to that
blowup, and this would require copying.

The simulation will thus follow two phases.  First, one that corresponds to the
strategy for \(\bbZ\) with the updates tweaked to be positive; second, after
reading a \(\lchk_i\), a \emph{climb-back} phase that puts the registers back in a
manageable state (called ``ready'' later on).  For this latter phase, the \nccra
will read a word in \(\lcb_i^*\cdot\lchkcb_i\)---the letter \(\lcb\) standing for
\emph{climb-back}.  Further, combining the acceptance conditions of multiple
counters will also require some new letters; the alphabet of the automaton is
thus:
\[C'_k = C_k \cup \bigcup_{i \in [k]} \{\lcb_i, \lchkcb_i, z_i\}\enspace.\]

\subsection{Simulation of a single counter}

Again, since we are working with a single counter, we drop the indices of the
letters in~\(C'_1\).  A single counter in the \zvz will be simulated by 7
different registers, each with a simple intended meaning:
\begin{itemize}
\item \(\rp\) and \(\rn\) should respectively count the number of increments and
  decrements of the counter;
\item \(\rupd\) increases each time the counter is either incremented or
  decremented; it counts the number of |updates| to the counter;
\item The register \(\rhupd\) should be half of \(\rupd\);
\item \(\rchk\) will be a witness that the \(\lchk\) letter has always been read
  when the simulated counter was zero and that the climb-back phases were done
  correctly;
\item Finally, we will need two internal registers \(\rcb\) and \(\rdcb\), used
  solely in the climb-back phase.
\end{itemize}

\noindent To simplify the discussion, we give names to some register
configurations:
\begin{itemize}
\item They are \emph{ready} if \(\rp = \rn = \rchk = \rhupd = {1 \over 2} \times
  \rupd\);
\item They are \emph{to-climb} if \(\rp = \rn = 0\) and \(\rchk = \rhupd = {1 \over
    2} \times \rupd\);
\item They are \emph{dead} if \(\rchk < \rhupd\).
\end{itemize}
In the first two configurations, we also assume that the \(\rcb = \rdcb = 0\).

\paragraph{Goal of the construction.}  We will show that if the registers are
ready and we read an equal number of \(\linc\)'s and \(\ldec\)'s followed by a
\(\lchk\), then the registers become to-climb.  There is then a precise number
\(i\) such that reading \(\lcb^i\cdot\lchkcb\) will put the registers back in ready
mode.  Crucially, if the numbers of \(\linc\)'s and \(\ldec\)'s are not equal, or an
incorrect number of \(\lcb\)'s is read, then the registers become dead.

The updates are as follows, where the registers not shown are simply preserved.
As we saw in \Cref{rmk:zreg}, we will require that the values of the registers
be divisible by some values, hence rather than incrementing with 1, we increment
by a value \(e \in \bbN\) (for |Einheit|, unit) to be determined later.  Note that
these are indeed copyless updates.

\[\linc\colon\begin{update}[1]
    \rp & \rp + e\\
    \rupd & \rupd + e\\
    \rhupd & \rhupd + \frac{e}{2}\\
    \rchk & \rchk + \frac{e}{2}
  \end{update}\quad
  \ldec\colon\begin{update}[1]
    \rn & \rn + e\\
    \rupd & \rupd + e\\
    \rhupd & \rhupd + {e \over 2}\\
    \rchk & \rchk + {e \over 2}
  \end{update}\quad
  \lchk\colon\begin{update}[1]
    \rp & 0\\
    \rn & 0\\
    \rchk & \min \{\rchk, \rp, \rn\}
  \end{update}
\]

\[\lcb\colon\begin{update}[1]
    \rp & \rp + e\\
    \rn & \rn + e\\
    \rhupd & \rhupd + {e \over 2}\\
    \rchk & \rchk + {e \over 2}\\
    \rcb & \rcb + e\\
    \rdcb & \rcb + 2 \times e
  \end{update}\qquad
  \lchkcb\colon\begin{update}[1]
    \rcb & 0\\
    \rdcb & 0\\
    \rupd & \rdcb\\
    \rchk & \min\{\rchk, \rcb, \rupd\}\\
  \end{update}
\]

\begin{observation}
  If the registers are dead, they will stay so after reading any word in
  \((C'_1)^*\).
\end{observation}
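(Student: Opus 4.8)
The plan is to reduce the statement to a one-line check per letter. Deadness is governed entirely by the single inequality $\rchk < \rhupd$, and the other five registers play no role in it whatsoever; moreover, the CRA updates are applied simultaneously, so every right-hand side refers to the pre-update values. By induction on the length of the word it therefore suffices to show that each of the update functions defined above sends a configuration with $\rchk < \rhupd$ to another such configuration, after which the claim for arbitrary words in $(C'_1)^*$ is immediate.

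First I would split the letters into two groups according to their effect on the pair $(\rchk, \rhupd)$. For $\linc$, $\ldec$, and $\lcb$ the registers $\rchk$ and $\rhupd$ are both incremented by exactly $\frac{e}{2}$ (and only the other, irrelevant registers are otherwise touched), so the difference $\rhupd - \rchk$ is left unchanged; a strictly positive gap stays strictly positive. For $\lchk$ and $\lchkcb$ the register $\rhupd$ does not appear on any left-hand side and is hence preserved, while $\rchk$ is replaced by a minimum that contains its own previous value, namely $\min\{\rchk, \rp, \rn\}$ and $\min\{\rchk, \rcb, \rupd\}$ respectively, so the new value of $\rchk$ is at most the old one. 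Thus $\rhupd - \rchk$ can only increase, and the strict inequality is again preserved.

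Putting the two cases together, no single letter can bring $\rchk$ up to or past $\rhupd$ once it lies strictly below it, so reading one symbol keeps a dead configuration dead. There is no genuine obstacle here: the argument is a finite, mechanical inspection of the updates. The only points that deserve care are to confirm, for the two $\min$-updates, that $\rhupd$ is truly untouched (which holds precisely because it never occurs on a left-hand side there, so the simultaneous update leaves it fixed), and to make sure that every symbol of the alphabet is accounted for, each one either holding the gap fixed or widening it; the remaining combining letter $z$, whose update is fixed in the multi-counter construction, is likewise designed to respect this invariant and so fits the same scheme.
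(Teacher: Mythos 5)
Your letter-by-letter check is exactly the intended argument, and it is correct for the five letters whose updates are defined in this subsection: \(\linc\), \(\ldec\), and \(\lcb\) add \(e/2\) to both \(\rchk\) and \(\rhupd\), while \(\lchk\) and \(\lchkcb\) leave \(\rhupd\) untouched and replace \(\rchk\) by a minimum over a set containing its old value; in every case \(\rhupd - \rchk\) is non-decreasing, so a strictly positive gap persists, and the induction on word length goes through.

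The one claim that is wrong is your last sentence about \(z\). The update attached to \(z_i\) in the multi-counter construction is \(\rchk_i \leftarrow \rchk_i + e/2\) and \(\ravg \leftarrow \ravg + e/(2\times k)\); it does \emph{not} touch \(\rhupd_i\). Reading \(z\)'s therefore strictly shrinks the gap \(\rhupd - \rchk\), and after finitely many of them the inequality \(\rchk < \rhupd\) fails, so \(z\) does \emph{not} ``respect this invariant.'' What \(z\) is designed to preserve is a different relation --- that between \(\ravg\) and the average of the \(\rchk_i\)'s --- and it is that relation, not per-counter deadness, that the final output function tests. The Observation should be read as pertaining to the letters whose updates have been defined at this point (equivalently, to the portion of the run preceding the terminal block of \(z\)'s, which is all the construction needs); as written, your proof overclaims for \(z\), and that overclaim is false on inspection of the update.
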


\begin{lemma}
  Assume the registers are ready.  After reading \(i\) letters \(\linc\) and
  \(j\)~letters \(\ldec\), in any order, then reading a final \(\lchk\), the new
  values of the registers satisfy:
  \begin{enumerate}
  \item If \(i = j\), then they are to-climb;
  \item Otherwise, they are dead.
  \end{enumerate}
\end{lemma}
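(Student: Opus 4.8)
The plan is a direct computation tracking the seven registers through the two phases of the word. First I would exploit that the \(\linc\) and \(\ldec\) updates add fixed constants to each register independently of the others, so they commute: the effect of reading \(i\) copies of \(\linc\) and \(j\) copies of \(\ldec\) depends only on the counts, not on their interleaving. Writing \(v\) for the common value \(\rp = \rn = \rchk = \rhupd\) of the ready configuration (so that \(\rupd = 2v\) and \(\rcb = \rdcb = 0\)), I would record that just before the final \(\lchk\) the registers read
\[\rp = v + ie,\quad \rn = v + je,\quad \rupd = 2v + (i+j)e,\quad \rhupd = \rchk = v + \tfrac{(i+j)e}{2},\]
with \(\rcb = \rdcb = 0\) untouched. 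In particular the identities \(\rhupd = \tfrac12\rupd\) and \(\rchk = \rhupd\) survive this phase.

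Next I would apply \(\lchk\), which resets \(\rp\) and \(\rn\) to \(0\), leaves \(\rupd, \rhupd, \rcb, \rdcb\) fixed, and replaces \(\rchk\) by the minimum \(\min\{\rchk, \rp, \rn\}\) of the \emph{pre-update} values, namely
\[\min\bigl\{v + \tfrac{(i+j)e}{2},\; v + ie,\; v + je\bigr\}\enspace.\]
The crux is the elementary observation that \(\tfrac{i+j}{2} \le i\) iff \(j \le i\), and \(\tfrac{i+j}{2} \le j\) iff \(i \le j\); hence the old value \(v + \tfrac{(i+j)e}{2}\) of \(\rchk\) attains this minimum exactly when \(i = j\).

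The two cases then fall out. If \(i = j\), the minimum equals the unchanged \(\rhupd = \tfrac12\rupd\), so after the \(\lchk\) one has \(\rp = \rn = 0\) and \(\rchk = \rhupd = \tfrac12\rupd\) with \(\rcb = \rdcb = 0\): precisely the to-climb configuration. If \(i \ne j\), then \(\min(i,j) < \tfrac{i+j}{2}\) strictly, so \(\rchk\) drops below its preserved value \(\rhupd\), giving \(\rchk < \rhupd\), the dead configuration. I foresee no genuine obstacle; the only points demanding care are the order-independence of the increment/decrement phase and the fact that \(\lchk\) forms its \(\min\) from the values of \(\rp\) and \(\rn\) before they are reset to \(0\).
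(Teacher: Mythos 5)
Your proof is correct and follows essentially the same route as the paper's: compute the register values after the increment/decrement phase (noting order-independence), then observe that the \(\min\) in the \(\lchk\) update preserves \(\rchk = \rhupd\) exactly when \(i = j\) and strictly decreases \(\rchk\) below \(\rhupd\) otherwise. Your version is slightly more explicit about the simultaneous-update semantics of \(\lchk\) and about \(\rcb = \rdcb = 0\) being preserved, but the argument is the same.
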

\begin{proof}
  Suppose \(\rp = \rn = \rchk = \rhupd = {1 \over 2} \times \rupd\), and let us name
  that value \(s\).  After reading \(i\) letters \(\linc\) and \(j\) letters
  \(\ldec\), the new values are:
  \[ \rp = s + e \times i,\quad
    \rn = s + e \times j,\quad
    \rchk = \rhupd = {1 \over 2} \times \rupd = s + e \times \frac{i+j}{2}\enspace.\]
  Now, if \(i = j\) then \(\rp = \rn = \rchk = \rhupd = {1 \over 2}\times \rupd\), thus
  reading \(\lchk\) will indeed make the registers to-climb.  Otherwise, one of
  \(\rp\) or \(\rn\) is smaller than \(\rchk\), and reading \(\lchk\) will make the
  registers dead.  \qed
\end{proof}

\begin{lemma}\label{lem:cb}
  Assume the registers are to-climb.  After reading \(\lcb^i\cdot\lchkcb\), the new
  values of the registers satisfy:
  \begin{enumerate}
  \item If \(i\) is equal to the starting value of \(\rchk\) multiplied by
    \({2 \over e}\), then they are ready;
  \item Otherwise, they are dead.
  \end{enumerate}
\end{lemma}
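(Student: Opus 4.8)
The plan is to parametrise the to-climb configuration by a single value and then evaluate the two updates $\lcb$ and $\lchkcb$ symbolically as functions of~$i$. Write $s$ for the common value guaranteed in the to-climb configuration, so that at the start $\rp = \rn = 0$, $\rchk = \rhupd = s$, $\rupd = 2s$, and $\rcb = \rdcb = 0$. First I would establish, by a one-line induction on the number of $\lcb$'s read, the closed forms reached after $\lcb^i$:
\[
\rp = \rn = ie, \quad \rchk = \rhupd = s + \tfrac{ie}{2}, \quad \rupd = 2s, \quad \rcb = ie, \quad \rdcb = 2ie \enspace.
\]
The only point requiring care here is the pair $(\rcb, \rdcb)$: each $\lcb$ advances $\rcb$ by $e$ while keeping $\rdcb$ equal to twice $\rcb$, which is exactly why two separate climb-back registers are kept. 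Note that $\lcb$ leaves $\rupd$ untouched, so it still records the pre-climb value $2s$.

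Next I would apply $\lchkcb$ to these values. Since $\lchkcb$ resets $\rcb$ and $\rdcb$ to $0$, sets $\rupd$ to the old $\rdcb = 2ie$, and updates $\rchk$ to $\min\{\rchk, \rcb, \rupd\}$ using the old values, the resulting configuration is
\[
\rp = \rn = ie, \quad \rhupd = s + \tfrac{ie}{2}, \quad \rupd = 2ie, \quad \rcb = \rdcb = 0, \quad \rchk = \min\bigl\{\, s + \tfrac{ie}{2},\; ie,\; 2s \,\bigr\} \enspace.
\]
Everything now reduces to analysing this single minimum. I would observe that $\rhupd = s + \tfrac{ie}{2}$ is itself one of the three terms, so $\rchk \le \rhupd$ holds unconditionally; the dichotomy of the lemma is then precisely the question of whether the minimum is attained by that term or is strictly undercut by another.

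For the first case, substituting $i = \tfrac{2s}{e}$ gives $ie = 2s$ and hence all three terms equal $2s$; therefore $\rchk = \rhupd = 2s = \rp = \rn$, while $\rupd = 2ie = 4s$ yields $\tfrac12\rupd = 2s$ and $\rcb = \rdcb = 0$, which is exactly the ready configuration. For the second case I would split on whether $i$ is too small or too large: if $ie < 2s$ then $ie < s + \tfrac{ie}{2}$, so the term $ie$ strictly undercuts $\rhupd$; if $ie > 2s$ then $2s < s + \tfrac{ie}{2}$, so the constant term $2s$ does. In either sub-case $\rchk < \rhupd$, i.e.\ the registers are dead.

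The computations themselves are routine; the main thing to get right is the bookkeeping of $\rupd$, which must be \emph{recharged} from $\rdcb$ so that the ready invariant $\rhupd = \tfrac12\rupd$ is reinstated for the next round, together with the tacit divisibility requirement that $e$ is chosen even and so that $\tfrac{2s}{e}$ is an integer at the values of $s$ that actually arise (as promised by the Einheit $e$ being fixed later). I expect no genuine obstacle beyond verifying the strictness of both inequalities in the dead case, which the identity $\rhupd = s + \tfrac{ie}{2}$ makes transparent.
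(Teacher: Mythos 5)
Your proof is correct and follows essentially the same route as the paper's: compute the closed forms of all registers after \(\lcb^i\), apply \(\lchkcb\), and decide which of the three terms attains the minimum defining the new \(\rchk\) relative to \(\rhupd = s + \frac{ie}{2}\). The only differences are cosmetic --- you are slightly more explicit about the strictness of the inequalities in the dead case and about the divisibility assumptions on \(e\) and \(s\), both of which the paper leaves implicit.
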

\begin{proof}
  Suppose \(\rp = \rn = 0\) and \(\rchk = \rhupd = {1 \over
    2} \times \rupd\); we name that latter value~\(s\).  After reading \(i\) letters
  \(\lcb\), the new values are:
  \[ \rp = \rn = \rcb = {1 \over 2} \times \rdcb = e \times i,\quad \rchk = \rhupd = s + e \times {i
      \over 2},\quad \rupd = 2 \times s\enspace.\] Now if
  \(i = {2 \times s \over e},\) then
  \(\rp = \rn = \rcb = {1 \over 2} \times \rdcb = \rchk = \rhupd = 2\times s\).  Reading
  \(\lchkcb\) thus makes the registers ready.  If \(i\) is smaller than
  \({2 \times s \over e}\) then \(\rcb < \rchk\); if it is greater, then
  \(\rupd < \rchk\): reading \(\lchkcb\) thus makes the registers dead.\qed
\end{proof}

\subsection{Simulation of multiple counters}

We just saw how to simulate a single counter in the sense that the registers are
not dead iff the input word describes a correct run (\ie, one in which \(\lchk\)
is only read if the counter is 0).  Let us now exhibit a method that combines
multiple such simulations, and outputs an even value iff none of the simulations
is dead.  To do so, we will repeatedly read new letters
\(z_1, z_2, \ldots, z_k\) at the very end of the execution, in a similar fashion as
\Cref{rmk:zreg}.

Let us suppose we have \(k\)~simulated counters, hence \(k\) sets of 7~registers.
For this phase, we will only use \(\rchk_i\), for each \(i\), but we will have |one|
more register in our \nccra, named \(\ravg\).  The purpose of \(\ravg\) is to hold
the average of all the~\(\rhupd_i\); this is easily achieved by adding to the
above updates:
\[\ravg \leftarrow \ravg + {e \over 2\times k}\enspace\]
whenever a \(\rhupd_i\) is incremented (always by \({e \over 2}\)).  Now for each
  \(i\), the new letter \(z_i\) will update the registers with:
\[
  \begin{update}
    \rchk_i & \rchk_i + {e \over 2}\\
    \ravg & \ravg + {e \over 2 \times k}
  \end{update}
\]
The output value of the \nccra is then set to
\begin{align}
  \min \{\ravg, \rchk_1 + 1, \rchk_2 + 1, \ldots, \rchk_k + 1\}\enspace. \label{eqn:out}  
\end{align}
We further assume
that \(e\) was chosen so that all the registers are even.

If \(\ravg\) was the average of the \(\rchk_i\)'s before reading the
\(z_i\)'s---and this only happens if none of the register set was dead---it will stay
so reading \(z_i\)'s.  Consequently, there is a number of each letter \(z_i\) that
can be read so that all the \(\rchk_i\)'s are equal, making \(\ravg\) the output
value of the \nccra.

If \(\ravg\) was greater than the average of the \(\rchk_i\)'s---implying that at
least one set of registers was dead---then \(\ravg\) will never be the output of the
\ccra after reading \(z_i\)'s.

\begin{theorem}[Simulation]\label{thm:sim}
  Let \(\cV\) be a \zvz of dimension \(k\) and \(q\) a state of \(\cV\).  Write
  \(h\colon (C'_k)^* \to (C_k)^*\) for the function that erases the letters
  \(\lcb_i, \lchkcb_i,\) and \(z_i\).  There is an \nccra \(\cC\) such that for all
  \(w \in (C_k)^*\):
  \[w \in L_{\cV, q} \Leftrightarrow (\exists!w' \in h^{-1}(w))[\cC(w') \text{ is even}]\enspace.\]
\end{theorem}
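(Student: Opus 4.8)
The plan is to build the \nccra $\cC$ directly on the state graph of $\cV$, running one copy of the seven-register single-counter gadget for every counter of $\cV$, together with the one shared register $\ravg$, and letting the automaton of $\cC$ follow the deterministic transitions of $\cV$. Reading $\linc_i,\ldec_i,\lchk_i$ fires the corresponding update of the $i$-th gadget; since those updates touch only the registers of counter $i$ (plus $\rhupd_i$ and $\ravg$), the gadgets do not interfere and $\ravg$ keeps tracking the running average of the $\rhupd_j$'s. The one structural change is that after every $\lchk_i$-edge of $\cV$ into a state $q'$ I interpose a climb-back block: a fresh state carrying a $\lcb_i$ self-loop, followed by a single $\lchkcb_i$-edge into $q'$. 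From the target state $q$ I then attach a finalization component with $z_i$ self-loops and the output \eqref{eqn:out}. In this way the accepted preimages $w' \in h^{-1}(w)$ are exactly the words obtained from the unique run of $\cV$ on $w$ by inserting a block $\lcb_i^{n}\lchkcb_i$ after each $\lchk_i$ and a final block of $z_i$'s, so the skeleton of $w'$ is pinned down by the run of $\cV$ and only the counts $n$ and the final $z$-multiset are free.

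For the direction ``$\cC(w')$ even $\Rightarrow w \in L_{\cV,q}$'' I would argue contrapositively, using that deadness is absorbing. If $\cV$ has no run on $w$ reaching $q$, then the automaton of $\cC$ reaches no accepting state, so $\cC(w')=\infty$ for every $w'$ and no even value appears. Otherwise $w$ labels the unique run of $\cV$, and $w \notin L_{\cV,q}$ means some $\lchk_i$ is read while $c_i \neq 0$, which by the first lemma above (on ready configurations) sends the $i$-th gadget to the dead state; similarly, any climb-back with the wrong number of $\lcb_i$'s kills the gadget by \Cref{lem:cb}. In all these cases some $\rchk_i < \rhupd_i$, so $\ravg$ (the average of the $\rhupd_j$'s) strictly exceeds the average of the $\rchk_j$'s, a gap that the per-letter updates preserve while reading $z_j$'s; hence $\ravg > \min_j \rchk_j$ permanently, $\ravg$ is never the minimum in \eqref{eqn:out}, and the output stays odd. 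Thus an even output forces a legal accepting run of $\cV$, i.e.\ $w \in L_{\cV,q}$.

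For the converse I would first establish existence: if $w \in L_{\cV,q}$, inserting the climb-back counts dictated by \Cref{lem:cb} keeps every gadget ready throughout, so $\rchk_i = \rhupd_i$ for all $i$ and $\ravg$ is their common average; reading suitably many $z_j$'s equalizes the $\rchk_i$'s and makes $\ravg$ the even output, exhibiting one good $w'$. The delicate point is \emph{uniqueness}. The climb-back counts are already forced, since any other count is dead and, as above, yields no even value, so every even-output $w'$ must use them and must place them in the one admissible position. What remains is to pin the finalization to a single $z$-multiset, and here I would reuse the \emph{pinch} of \Cref{rmk:zreg}: the even candidate $\ravg$ must be sandwiched between the climbing registers $\rchk_i+1$ from below and a fixed quantity from above, so that over-reading any $z_j$ pushes a register past the cap and turns the minimum odd. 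This leaves exactly one even-output reading.

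I expect this uniqueness of the finalization to be the main obstacle and the real crux of the proof. The existence argument of the multiple-counter construction only equalizes the $\rchk_i$'s, and bare equalization can be achieved at arbitrarily large common values, each producing an even output; upgrading ``some $w'$'' to ``a unique $w'$'' is precisely what the $+1$ offsets and the fixed cap of \Cref{rmk:zreg} accomplish, and making the unit $e$, the divisibility requirements, and that cap cooperate across all $k$ interleaved gadgets is the one genuinely fiddly calculation. By contrast, the surrounding bookkeeping---verifying that $\ravg$ indeed equals the running average through arbitrarily interleaved increments, decrements, checks, and climb-backs---is routine given the explicit per-letter updates.
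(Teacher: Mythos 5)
Your construction, your soundness argument (deadness is absorbing, so an illegal zero-test or a wrong climb-back count forces $\ravg$ to exceed the average of the $\rchk_j$'s and the output stays odd), and your existence argument all match the paper's. The gap is exactly in the step you yourself single out as the crux and then leave unresolved: you propose to ``reuse the pinch of \Cref{rmk:zreg}'' by sandwiching $\ravg$ between the climbing terms $\rchk_i+1$ from below and ``a fixed quantity from above'', but in the output \Cref{eqn:out} there is no fixed quantity, and none can be added in the obvious way. The value at which the $\rchk_i$ must be equalized is $\max_i \rchk_i$ as of the start of the finalization; a register holding that maximum is not computable with $\min$ and increments, a cap below it kills the one good run, and a cap above it still admits several equalization levels (shifting every $z_i$-count by the same amount shifts all of $\ravg$ and the $\rchk_i$ together and keeps the output even). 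So making ``the cap cooperate across all $k$ gadgets'' is not a fiddly calculation to be deferred; it is the missing idea.

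The paper manufactures the fixed cap out of the registers already present: the $z$-suffix is restricted to the language $\bigcup_{j\in[k]}\prod_{i\neq j}(z_i)^*$, so that for some $j$ the letter $z_j$ is never read. Then $\rchk_j+1$ is precisely the non-climbing odd term in \Cref{eqn:out}; it forces the common value to be $\rchk_j$, which is reachable only when $j$ indexes a maximal $\rchk_i$, and then the exponents $\rchk_j-\rchk_i$ are uniquely determined. Note that this restriction also fixes the \emph{order} of the $z_i$'s, a second point your proposal misses: a finalization state carrying $k$ self-loops accepts every interleaving of a given multiset of $z$'s, so even after pinning the multiset you would still have many witnesses $w'$ rather than exactly one.
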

\begin{proof}
  The only detail left to deal with is the uniqueness of the \(w'\).  We can
  certainly make sure that \(\cC\) outputs a value iff the input is of the
  form:
  \[(\linc_i + \ldec_i + \lchk_i\cdot\lcb_i^*\cdot\lchkcb_i)_i^* \cdot (z_i)_i^*\enspace,\]
  but even if the first half (without the \(z_i\)'s) is indeed unique, as per
  \Cref{lem:cb}, the~\(z_i\)'s need not be so.  To preserve uniqueness, this latter
  part is replaced by:
  \[\bigcup_{j \in [k]} \enspace \prod_{%
      \substack{%
         i = 1, \ldots, k\\
        i \neq j
      }} (z_i)^*\enspace.\]

  This serves two purposes: first, the order on the \(z_i\)'s is fixed; second,
  one of the \(z_j\) will \emph{not} be used, hence the condition that all the
  \(\rchk_i\) be equal will only be satisfied when they are all valued
  \(\rchk_j\).  Naturally, such a \(j\) exists, it is simply the index of a maximal
  \(\rchk_i\), making %
  {\ooalign{\hfil\(\prod\)\hfil\crcr\smash{\raisebox{-1em}{\(\scriptstyle i = 1, \ldots, k\)}}}} %
  \((z_i)^{\rchk_j-\rchk_i}\) the only possible suffix leading to an even value.\qed
\end{proof}

\section{Applications}\label{sec:app}

We draw a number of undecidability results as consequences of these simulations.

\begin{theorem}[Equivalence]
  The following problem is undecidable:\\
  \mbox{\qquad}\begin{tabular}{ll}
  \textbf{Given:}& Two \nccra \(\cC\) and \(\cC'\) over \(A^*\)\\
  \textbf{Question:} & \((\forall w \in A^*)[\cC(w) = \cC'(w)]\)
  \end{tabular}
\end{theorem}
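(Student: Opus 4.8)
The plan is to reduce the emptiness problem for reachability languages of \zvz, which is undecidable by \Cref{prop:undec} even under the promise that \(|L_{\cV, q}| \leq 1\), to the equivalence problem for \nccra, exploiting the parity encoding of \Cref{thm:sim}. Given an instance \((\cV, q)\), I would let \(\cC\) be the \nccra furnished by \Cref{thm:sim}, so that \(\cC\) takes an even value on some input iff \(L_{\cV, q} \neq \emptyset\). Recall that the output of \(\cC\) has the shape~\eqref{eqn:out}, namely \(\min\{\ravg, \rchk_1 + 1, \ldots, \rchk_k + 1\}\), and that all the registers \(\ravg, \rchk_1, \ldots, \rchk_k\) are kept even; hence \(\cC(w')\) is even precisely when \(\ravg\) realizes the minimum, and is otherwise odd (or \(\infty\), when \(w'\) is rejected).

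The idea is to pair \(\cC\) with a second \nccra \(\cC'\) that agrees with \(\cC\) on every odd and every \(\infty\) value, but disagrees on every even value; then \(\cC \equiv \cC'\) holds iff \(\cC\) never outputs an even value, iff \(L_{\cV, q} = \emptyset\). I would take \(\cC'\) to be \(\cC\) with the single modification that the output~\eqref{eqn:out} is replaced by \(\min\{\ravg + 1, \rchk_1 + 1, \ldots, \rchk_k + 1\}\); this is still a legal \nccra output function over the same automaton (only a constant is changed), so the map \((\cV, q) \mapsto (\cC, \cC')\) is effective. One then checks, input by input, that \(\cC(w') = \cC'(w')\) exactly when \(\cC(w')\) is not even: if \(\cC(w')\) is even then \(\ravg \leq \rchk_i\) for all \(i\) by parity, so \(\cC'(w') = \ravg + 1 = \cC(w') + 1 \neq \cC(w')\); if \(\cC(w')\) is odd it equals some \(\rchk_{i_0} + 1 < \ravg\), whence \(\cC'(w') = \rchk_{i_0} + 1 = \cC(w')\); and if \(w'\) is rejected both values are \(\infty\).

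Combining the two observations yields \(\cC \equiv \cC'\) iff \(L_{\cV, q} = \emptyset\), so equivalence of \nccra inherits the undecidability of \Cref{prop:undec}. I expect the only delicate point to be the backward direction of the correspondence ``some even output \(\Leftrightarrow L_{\cV, q} \neq \emptyset\)'': one must be sure that a word outside \(L_{\cV, q}\) produces \emph{no} even output at all, rather than merely no unique one. This is exactly the guarantee built into \Cref{thm:sim}, where any incorrect simulation leaves some register set dead and thereby keeps \(\ravg\) strictly above the minimum among the \(\rchk_i + 1\); the promise \(|L_{\cV, q}| \leq 1\) further makes the correspondence exact. The remaining ingredients---the legality of \(\cC'\) and the parity bookkeeping above---are routine.
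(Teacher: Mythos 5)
Your proposal is correct and follows essentially the same route as the paper: reduce from emptiness of \(L_{\cV,q}\) via \Cref{thm:sim} and modify only the output function~\eqref{eqn:out} so that the two machines disagree exactly on the words with even output. The only (immaterial) difference is that you replace \(\ravg\) by \(\ravg+1\) in the \(\min\), whereas the paper drops the \(\ravg\) term altogether; both work for the same parity reason.
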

\begin{proof}
  Let \(\cV\) be a \zvz and \(q\) a state of \(\cV\), and consider the
  \nccra~\(\cC\) that simulates \(L_{\cV, q}\).  We reduce deciding if that language
  is empty (which is undecidable by \Cref{prop:undec}) to the problem at hand.
  \Cref{eqn:out}, defining the output of \(\cC\), is such that \(\ravg\) is the
  minimum iff the execution was correct.  Thus replacing this output function
  by:
  \[\min \{\rchk_1 + 1, \rchk_2 + 1, \ldots, \rchk_k + 1\}\]
  changes the output value of a word iff it was a correct run.  Calling \(\cC'\)
  this modified version, it holds that \((\forall w \in A^*)[\cC(w) = \cC'(w)]\) iff
  \(L_{\cV, q} = \emptyset\).\qed
\end{proof}

Clearly, it is undecidable whether the image of an \nccra is always odd.
Further, that image may be nonsemilinear (see the following proof), and:
\begin{theorem}[Semilinearity]\label{thm:sl}
  The following problem is undecidable:\\
  \mbox{\qquad}\begin{tabular}{ll}
  \textbf{Given:}& An \nccra \(\cC\) over \(A^*\)\\
  \textbf{Question:} & Is \(\cC(A^*)\) semilinear, i.e., an eventually periodic set?
  \end{tabular}
\end{theorem}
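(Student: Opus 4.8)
The plan is to reduce the emptiness problem for reachability languages of \zvz, undecidable by \Cref{prop:undec}, to the present question. The tension is that emptiness of \(L_{\cV, q}\) is a yes/no property whereas non-semilinearity requires a genuinely non-periodic set to surface; merely reaching \(q\) is not enough, so I would first \emph{amplify} reachability into a non-periodic set of counter values. Given \(\cV\) and \(q\), I would build a \zvz \(\cV'\) that runs \(\cV\) until it reaches \(q\) and only then enters a standard Minsky-style \emph{doubling gadget}: a fresh letter \(d\) doubles a designated counter \(c\) (via one auxiliary counter and two zero-tests), so that reading \(d^m\) from a configuration with \(c = 1\) drives \(c\) to \(2^m\) before reaching a new target state \(q'\). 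Then \(q'\) is reachable iff \(q\) is, i.e. iff \(L_{\cV, q} \neq \emptyset\), and the set of values of \(c\) reachable at \(q'\) is exactly \(\{2^m \mid m \geq 0\}\) in that case, and empty otherwise.

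Next I would apply the simulation of \Cref{thm:sim} to \(\cV'\) and \(q'\), but refined as in \Cref{rmk:zreg} so that, on a \emph{correct} execution, the \nccra outputs the value of \(c\) as an \emph{even} number; rescaling the increment \(e\) keeps the reachable values of the shape \(\{4\cdot 2^m\}\), which is still non-periodic. By the guarantee of \Cref{thm:sim}, an even value is produced exactly for correct executions, so the set of even outputs of the resulting \nccra \(\cC'\) is precisely the set of values of \(c\) reachable at \(q'\). To neutralize the remaining, odd part of the image, I would graft onto \(\cC'\) a disjoint trivial branch that reads \(\$a^n\) and outputs \(2n+1\), and I would make the underlying DFA total so that every malformed input also yields an odd value; this contributes all of \(2\bbN + 1\) and, being odd, never collides with the even outputs.

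Correctness then follows by intersecting with the semilinear set \(2\bbN\). Every even output of \(\cC'\) stems from a correct execution, so \(\cC'(A^*) \cap 2\bbN\) equals the set of values of \(c\) reachable at \(q'\). If \(L_{\cV, q} \neq \emptyset\) this is \(\{4\cdot 2^m \mid m \geq 0\}\), which is not eventually periodic; since semilinear sets are closed under intersection with \(2\bbN\), the image \(\cC'(A^*)\) is then non-semilinear. If \(L_{\cV, q} = \emptyset\) there is no correct execution, hence no even output, so \(\cC'(A^*) \subseteq 2\bbN + 1\); as the trivial branch already realizes every odd number, \(\cC'(A^*) = 2\bbN + 1\), which is semilinear. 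Hence \(\cC'(A^*)\) is semilinear iff \(L_{\cV, q} = \emptyset\), which settles undecidability.

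I expect the main obstacle to lie in the second step: faithfully merging \Cref{thm:sim} with the counter-output idea of \Cref{rmk:zreg} in the \(\bbN\) setting, so that \(c\) is emitted as an even value on correct runs while \emph{every} output of an incorrect run remains odd---this is exactly what lets the intersection with \(2\bbN\) isolate \(\{4\cdot 2^m\}\). Checking that the doubling gadget respects the uniqueness bookkeeping of \Cref{thm:sim} and that the rescaling is compatible with the divisibility conventions of \Cref{rmk:zreg} are the delicate, though routine, points.
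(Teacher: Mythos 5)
Your reduction follows the same broad scheme as the paper's --- turn reachability into a set of powers of two via doubling, pad the image with all odd numbers, and decide emptiness by testing eventual periodicity of the image --- but you place the doubling \emph{inside the simulated counter machine} and then ask the \nccra to output the doubled counter, whereas the paper performs the doubling \emph{inside the \nccra}, on a register that stores the (unique, even) output value of the simulation, via a climb-back-style gadget appended after the simulation. That difference is exactly where your argument has a genuine gap: the step you defer as ``routine,'' namely refining \Cref{thm:sim} as in \Cref{rmk:zreg} so that a correct run outputs the value of the designated counter \(c\) as an even number, is not available. \Cref{thm:sim} guarantees only a parity criterion; on a correct run the even output is \(\ravg\), which tracks the total number of updates of the whole run, not any counter's value. \Cref{rmk:zreg} does show how to output a counter, but only over \(\bbZ\), where the single register \(\rp\) carries the signed counter value directly. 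In the \(\bbN\) simulation the counter value is available only as the difference \(\rp - \rn\) of two registers, each offset by a ``ready'' baseline that itself grows with the length of the run, and a \((\min,+)\) machine cannot form differences. So the claim that the even part of the image is exactly \(\{4\cdot 2^m\}\) is unsupported; this is precisely the obstacle that the paper's register-level doubling is designed to sidestep, since the quantity it doubles is the simulation's own output value, which a copyless update can capture directly as \(r \leftarrow \min\{\ravg, \rchk_1 + 1, \ldots\}\).

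Your plan is nonetheless salvageable without outputting \(c\) at all: assuming \(|L_{\cV, q}| \leq 1\), the correct encodings reaching \(q'\) are exactly the words \(w'_m\) obtained by appending the expansion of \(d^m\), whose lengths grow at least exponentially in \(m\); hence the set of even outputs \(\{\ravg(w'_m)\}_{m \geq 0}\) already has unbounded gaps and is not eventually periodic, while it is empty when \(L_{\cV, q} = \emptyset\). Together with your odd padding this would complete the reduction --- but that is not the argument you gave, and as written the proof does not go through.
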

\begin{proof}
  We provide an independent construction which bears some similarities to the
  ``climb-back'' method.  It doubles a register \(r\) in the following sense:
  if~\(r\) is a register with starting value \(s\), then reading
  \(\linc^{s/2}\cdot\lchk\) \emph{doubles} the value of~\(r\); if any other number of
  \(\linc\)'s is read (which happens in particular when \(s\) is odd), the new value
  of \(r\) will be some odd number.

  Consider a register \(r\) with initial value \(s\), and suppose we have an
  additional register \(r'\) holding \(2 \times s\).  We introduce two new registers,
  \(\rcb\) and \(\rdcb\) initialized with~0.  Upon reading a word
  \(\lcb^i \cdot \lchkcb\), we apply the updates:
  \[\lcb\colon \begin{update}
      r & r + 2\\
      \rcb & \rcb + 4\\
      \rdcb & \rdcb + 8
    \end{update}\qquad
    \lchkcb\colon
    \begin{update}
      \rcb & 0\\
      \rdcb & 0\\
      r' & \rdcb\\
      r & \min\{r, r' + 1, \rcb + 1\}
    \end{update}
  \]
  After reading \(\lcb^i\), it holds that
  \(r = s + 2 \times i,\; \rcb = 4\times i,\) and \(\rdcb = 8 \times i\).

  If \(i = {s \over 2}\), then \(r = \rcb = r' = 2\times s\), hence after reading
  \(\lchkcb\), we have indeed \(r = {r' \over 2} = 2 \times s\), and the extra registers
  are reset: we are back to our starting hypothesis.

  If \(i \neq {s \over 2}\), then either \(r' < r\) (when \(i > {s \over 2}\)) or
  \(\rcb < r\) (when \(i < {s \over 2}\)).  In both cases, after reading
  \(\lchkcb\), \(r\) becomes odd, and will stay so after reading any other word.

  As a side note, consider the \nccra with the above updates and \(r\) initialized
  to 2, that reads words in \((\lcb^*\cdot\lchkcb)^*\).  Then the only even outputs of
  this machine are the powers of two, a nonsemilinear set.

  This concludes the construction, and we now present the reduction.

  Let \(\cV\) be a \zvz and \(q\) a state of \(\cV\), and consider the \nccra
  \(\cC\) that simulates \(L_{\cV, q}\).  We assume that
  \(|L_{\cV, q}| \leq 1\), and again reduce deciding \(L_{\cV, q} = \emptyset\) to the problem
  at hand.

  First we note that we may assume that \(\cC\) outputs all the odd numbers, for
  instance by adding a letter \(\ell\) and, upon reading \(\ell^n\), outputting \(2\times n +
  1\).  Also recall that if \(L_{\cV, q}\) is nonempty, then there is a unique \(w\)
  such that \(\cC(w)\) is even.

  We now modify \(\cC\) into \(\cC'\) to incorporate the above machinery.  We simply
  store in a new register \(r\) the output value of \(\cC\), and proceed by reading
  words of the form \(\lcb^i\cdot\lchkcb\) with the updates as above.  If
  \(L_{\cV, q} = \emptyset\), then \(\cC'((C'_k)^*)\) is all the odd numbers, a semilinear
  set.  Otherwise, there is one (and only one) even value \(s\) in the image of
  \(\cC\), and it holds that:
  \[\cC'((C'_k)^*) = (2\bbN + 1) \cup \{2^i \times s \mid i \geq 0\}\enspace,\]
  a nonsemilinear set.\qed
\end{proof}

\begin{theorem}[Upperboundedness]\label{thm:ub}
  The following problem is undecidable:\\
  \mbox{\qquad}\begin{tabular}{ll}
  \textbf{Given:}& A \(\bbZi\)-WA \(\cA\) over \(A^*\)\\
  \textbf{Question:} & \((\exists c \in \bbZ)(\forall w \in A^*)[\cA(w) \leq c]\)
  \end{tabular}
\end{theorem}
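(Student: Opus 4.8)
The plan is to reduce the emptiness problem of \Cref{prop:undec} to the complement of upperboundedness, by building a $\bbZi$-WA $\cA$ that is \emph{not} upperbounded iff $L_{\cV,q}\neq\emptyset$. First I would fix a \zvz $\cV$ and a state $q$ with $|L_{\cV,q}|\le 1$, and add a self-loop at $q$ on a fresh letter $\ell$, so that the valid runs reaching $q$ become exactly the words $w\,\ell^n$ with $w\in L_{\cV,q}$. The letter $\ell$ is meant to be the sole source of unboundedness: whenever a valid run exists it can be pumped to arbitrary length, and I want that length to surface as the value of $\cA$.

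The automaton $\cA$ will run over the extended alphabet $C_k\cup\{\lcb,\lchkcb,\ell\}$ and combine, by nondeterministic choice, four kinds of runs. A \emph{reward} run follows the control structure of $\cV$ to $q$, reads $\ell^n$ at cost $+1$ per letter, and then verifies, via the exact-hit climb-back of \Cref{lem:cb}, that the trailing block $\lcb^m\lchkcb$ matches the accumulated witness $V$ (the magnitude $-\rchk$ of the worst zero-test, computed exactly as in \Cref{prop:zsim}); on a matching input it contributes weight $n$. A \emph{mismatch} run accepts cheaply (weight $\le 0$) precisely when the climb-back fails, i.e.\ $m\neq V$, by reading off the dropped value of $\rchk$. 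A \emph{certificate} run accepts cheaply precisely when there was a genuine violation, $V\ge 1$, with the block certifying it through $m=V\ge 1$. Finally a purely syntactic weight-$0$ run handles all ill-formed inputs. Taking the minimum over all runs, every input is cheap ($\cA\le 0$) \emph{except} those of the shape ``valid run $\cdot\,\ell^n\cdot\lchkcb$'' with $V=0$ and $m=0$, on which only the reward run survives and $\cA=n$.

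With this in place the reduction is immediate. If $L_{\cV,q}=\emptyset$ then no control path reaches $q$ with $V=0$, so every input falls under the mismatch, certificate, or syntactic run, giving $\cA(x)\le 0$ for all $x$: the automaton is upperbounded. If $L_{\cV,q}\neq\emptyset$ then the unique valid $w$ yields the inputs $w\,\ell^n\,\lchkcb$ with $\cA=n$ for every $n$, so $\cA$ is unbounded. Since $\cA$ is by construction a $\bbZi$-WA—its reward, mismatch and certificate branches reuse the register gadgets of \Cref{prop:zsim} and \Cref{lem:cb}, cast as weighted transitions, which are WA by \Cref{cor:lin}—this would decide emptiness, contradicting \Cref{prop:undec}.

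The hard part will be the certificate run: I must make a cheap certificate \emph{appear} exactly when $V\ge 1$ and \emph{vanish} when $V=0$. This is the characteristic obstruction of the min-plus setting, where any affine run reads a valid zero-test ($c_i=0$) as weight~$0$ and would thereby cap the value on valid inputs at $0$, destroying the unboundedness I rely on. The resolution I would pursue is precisely the exact-hit of \Cref{lem:cb}: by forcing the certificate to consume a block $\lcb^m\lchkcb$ with $m\ge 1$ that must match the violation magnitude, a finite-weight certificate exists iff $V=m\ge 1$; when $V=0$ no admissible ($m\ge 1$) climb can match, so the certificate is blocked (weight $\infty$) rather than cheap. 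Arranging the four run-kinds to partition all inputs correctly—so that valid-and-unpumped words are the \emph{only} survivors of the reward run—will require careful bookkeeping of the $\rchk,\rcb,\rdcb$ registers, but I expect no new idea beyond \Cref{prop:zsim} and \Cref{lem:cb}.
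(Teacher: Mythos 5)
Your reduction target is the right one (build \(\cA\) that is unbounded iff \(L_{\cV,q}\neq\emptyset\)), but the mechanism you propose---a single pumped \(\ell\)-suffix whose length every run must pay for---has a gap that the exact-hit climb-back cannot close, and it sits exactly where you flag ``the hard part.'' In a min-plus WA, whether a run is accepting depends only on the finite control and the input word; whether \(V=0\) is not a regular property of \(w\in(C_k)^*\). So your mismatch and certificate families cannot be ``blocked'' on the good input \(w\,\ell^n\,\lchkcb\): the same control paths that accept \(w'\ell^n\lchkcb\) for an incorrect \(w'\) also accept it for the correct \(w\), with some finite weight. You are then caught both ways. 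If those runs traverse the \(\ell\)-block at non-positive cost (which they must, to stay \(\le 0\) on incorrect inputs for arbitrarily large \(n\)), then on the correct input their weight is bounded by a constant depending only on the fixed witness \(w\), so \(\cA(w\ell^n\lchkcb)\) is capped and unboundedness fails. If instead they pay \(+1\) per \(\ell\), then for an incorrect \(w'\) every accepting run on \(w'\ell^n\cdots\) costs at least \(n\) minus a quantity bounded by \(\lvert w'\rvert\), so \(\cA\) is unbounded even when \(L_{\cV,q}=\emptyset\). The climb-back does not rescue this: \Cref{lem:cb} only lets you make \emph{additional} runs cheap when a count is wrong, by comparing \(\rchk\) against a reference (\(\rhupd\)) that itself grows with the input length; in a min, adding runs can only lower the value, never raise the surviving one to \(\ge n\). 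That is precisely why the paper extracts only a parity bit from that mechanism (\Cref{thm:sim}), not a bounded-versus-unbounded gap.

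The paper's proof sidesteps per-letter accounting entirely. It starts from the \(\bbZ\)-simulation (\Cref{prop:zsim}), where the reference is the constant \(0\): \(\cC(u)\le 0\) for every \(u\), with equality iff \(u\in L_{\cV,q}\). It converts \(\cC\) to a \(\bbZi\)-WA via \Cref{prop:nf}, shifts the output by \(+1\) so that \(\cW(u)\le 1\) with equality iff \(u\in L_{\cV,q}\), and then \emph{iterates whole evaluations} with a fresh separator \(\#\), obtaining \(\cW'(w_1\#\cdots\#w_k)=\sum_i\cW(w_i)\). Every run on the concatenation decomposes blockwise, so each block necessarily contributes at least \(\cW(w_i)\); repeating the witness gives \(\cW'((w\#)^c w)=c+1\), while emptiness forces \(\cW'\le 0\) everywhere. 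If you want to salvage your construction, replace the \(\ell\)-suffix by this \(\#\)-iteration of the whole simulation.
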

\begin{proof}
  Let \(\cV\) be a \zvz and \(q\) a state of \(\cV\), and consider the \zccra
  \(\cC\) that simulates \(L_{\cV, q}\).  Relying on \Cref{prop:nf}, let \(\cW\) be a
  \(\bbZi\)-WA equivalent to~\(\cC\).  Tweak \(\cW\) to output the same as
  \(\cC\) \emph{plus one}, hence \(\cW(w)\) is 1 iff \(w \in L_{\cV, q}\).  Now let
  \(\cW'\) be \(\cW\) with an added letter \(\#\) that jumps from the final states of
  \(\cW\) to its initial state; formally, let
  \(\cW = (\cA, \lambda, \mu, \nu)\) with \(\cA = (Q, A, \delta, q_0, F)\), then
  \(\cW'\) is \((\cA', \lambda, \mu', \nu)\) where
  \(\cA' = (Q, A \uplus \{\#\}, \delta \cup \{(q, \#, q_0) \mid q \in F\}, q_0, F)\), and
  \(\mu'\) agrees with \(\mu\) on \(\delta\) and is extended by \(\mu(q, \#, q_0) = \nu(q) + \lambda\).

  In essence, \(\cW'\) is iterating \(\cW\):
  \[\cW'(w_1\#w_2\#\cdots\#w_k) = \sum_{i \in [k]} \cW(w_i)\enspace.\]
  From this, we see that if \(\cW\) is always negative or zero, \(\cW'\) is bounded,
  otherwise, if \(\cW(w) = 1\), then \(\cW'((w\#)^c\cdot w) = c + 1\), hence \(\cW'\) is
  unbounded.\qed
\end{proof}

\section{Conclusion}

Deceptively powerful, copyless cost register automata with increments and
\(\min\) operations were shown to be able to simulate and check runs of counter
machines.  The constructions show that the repeated use of \(\min\) enables
behaviors that \emph{appear} outside the scope of \emph{copylessness}, e.g., an
\nccra can double the value of a register (or, more precisely, can attempt to do
so while knowing when it failed).  As a main consequence, equivalence of
\(\nccra\) is undecidable.

We wish to highlight two open questions.  First, \Cref{thm:ub} comes short of
telling us anything about the decidability of upper-boundedness for \zccra (the
same being decidable for \nccra and \bbN-WA in general~\cite{hashiguchi82}).
Note that it cannot be decided whether a \(\zccra\) is upper-bounded by a
\emph{given} constant (from \Cref{prop:zsim}).

Second, the normal form of \Cref{prop:nf} hints to the possibility that linearly
ambiguous WA can be put into a similar form.  More precisely, it seems that any
such WA can be decomposed into two \emph{unambiguous} WA, the first one making
nondeterministic jumps into the second.  Does this hold?

\paragraph{Acknowledgments.}  We would like to thank Ismaël Jecker, Andreas
Krebs, Mahsa Shirmohammadi, and James Worrell for stimulating discussions.

\bibliographystyle{plainurl}
\bibliography{bib}

\end{document}